\newtheorem{theorem}{Theorem}
\newtheorem{lemma}{Lemma}
\newtheorem{corollary}{Corollary}
\newtheorem{definition}{Definition}
\newtheorem{claim}{Claim}
\newtheorem{remark}{Remark}
\newcommand{\bX}{\mathbf{X}}
\newcommand{\bx}{\mathbf{x}}
\newcommand{\bY}{\mathbf{Y}}
\newcommand{\by}{\mathbf{y}}
\newcommand{\bz}{\mathbf{z}}
\newcommand{\bbP}{\mathbb{P}}
\newcommand{\bbR}{\mathbb{R}}
\newcommand{\cX}{{\cal X}}
\newcommand{\cY}{{\cal Y}}
\newcommand{\cZ}{{\cal Z}}
\newcommand{\cS}{{\cal S}}
\newcommand{\cC}{{\cal C}}
\newcommand{\cL}{{\cal L}}
\newcommand{\cB}{{\cal B}}
\newcommand{\cP}{{\cal P}}
\newcommand{\cA}{{\cal A}}
\newcommand{\view}{V}
\newcommand{\td}[1]{\tilde{#1}}
\newcommand{\rx}{X}
\newcommand{\removed}[1]{}
\newcommand{\tdc}{{\td{c}}}
\newcommand{\tdx}{{\td{x}}}
\newcommand{\tdbx}{{\td{\bx}}}
\newcommand{\e}{\epsilon}
\newcommand{\gunc}[1]{Gaussian UNC[$\gamma^2$,$\delta^2$] }
\newcommand{\pgunc}[1]{Passive-Gaussian UNC[$\gamma^2$,$\delta^2$] }
\patchcmd{\@IEEEeqnarray}{\relax}{\relax\intertext@}{}{}
\newcommand{\gsq}{\gamma^2}
\newcommand{\dsq}{\delta^2}
\newcommand{\tsq}{\theta^2}
\newcommand{\cO}{\mathcal{O}}
\newcommand{\cN}{\mathcal{N}}
\newcommand{\cG}{\mathcal{G}}
\newcommand{\tdcA}{\tilde{\cA}}
\newcommand{\tdcB}{\tilde{\cB}}
\newcommand{\tdX}{\tilde{X}}
\newcommand{\scrP}{\mathscr{P}}
\newcommand{\guncw}{{{Gaussian UNC}}}
\newcommand{\GUNC}{{{\fontfamily{lmtt}\selectfont{GaussUNC}}}}
\newcommand{\SGUNC}{{{\fontfamily{lmtt}\selectfont{SimGaussUNC}}}}
\newcommand{\gsqdsq}{$[\gsq,\dsq]$}
\title{Commitment over Gaussian Unfair Noisy Channels}
\author{
\IEEEauthorblockN{Amitalok J. Budkuley\textbf{*}, Pranav Joshi\textbf{*}, Manideep Mamindlapally\textbf{*} and Anuj Kumar Yadav\textbf{*}{{$^{\href{https://orcid.org/0000-0001-7763-3787}{\includegraphics[scale=0.08]{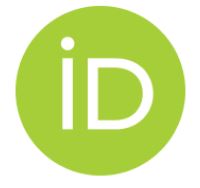}}}$}}}\\
\thanks{
\noindent \textbf{Amitalok J. Budkuley} is with the Department of Electronics and Electrical Communication Engineering, Indian Institute of Technology Kharagpur, West Bengal, India (Email: \texttt{amitalok@ece.iitkgp.ac.in}). 

\noindent \textbf{Pranav Joshi} is an independent researcher (Email: \texttt{pranavjoshi@iitkgp.ac.in}).

\noindent \textbf{Manideep Mamindally} is with the Tata Institute of Fundamental Research (TIFR), Mumbai, India (Email: \texttt{manideepyx@iitkgp.ac.in}).

\noindent \textbf{Anuj Kumar Yadav} is with the Laboratory for Information in Networked Systems( LINX), School of Computer and Communication Sciences, École Polytechnique Fédérale de Lausanne (EPFL), Switzerland (Email: \texttt{anuj.yadav@epfl.ch}).\\
}}
\begin{document}
\maketitle 
\maketitle 
{\let\thefootnote\relax\footnotetext{\textbf{{($*$) : The paper follows alphabetical author order. {PJ}, MM, and AKY have equally contributed to this work.}}}}
\begin{abstract}
%
Commitment is a key primitive which resides at the heart of several cryptographic protocols. Noisy channels can help realize information-theoretically secure commitment schemes; however, their imprecise statistical characterization can severely impair such schemes, especially their security guarantees. Keeping our focus on channel `unreliability' in this work, we study commitment over unreliable continuous alphabet channels called the Gaussian unfair noisy channels or Gaussian UNCs. \\

We present the first results on the optimal throughput or commitment capacity of Gaussian UNCs. It is known that `classical' Gaussian channels have infinite commitment capacity, even under finite transmit power constraints. For `unreliable' Gaussian UNCs, we prove the surprising result that their commitment capacity may be finite, and in some cases, zero. When commitment is possible, we present achievable rate lower bounds by constructing positive-throughput protocols under given input power constraint, and (two-sided) channel elasticity at committer Alice and receiver Bob. 
Our achievability results establish an interesting fact – Gaussian UNCs with zero elasticity have infinite commitment capacity - which brings a completely new perspective to why classic Gaussian channels, i.e., Gaussian UNCs with zero elasticity, have infinite capacity. Finally, we precisely characterize the positive commitment capacity threshold for a Gaussian UNC in terms of the channel elasticity, when the transmit power tends to  infinity.
\end{abstract}
%
%
\newpage
\tableofcontents
\section{Introduction}\label{sec:introduction}

%
Commitment is a widely studied cryptographic primitive. In essence, commitment can be seen as a successful realization of a sealed envelope exchange between two mutually distrustful parties. A \emph{committer} Alice seeks to \emph{commit} to a message $c.$ She puts this message $c$ in a sealed envelope, and distributes it publicly to \emph{receiver} Bob. At a later time of her choosing, Alice opens the envelope to \emph{reveal} her message to Bob. In the sealed envelope exchange, both parties seek  the following guarantees: Alice seeks a \emph{hiding} guarantee where Bob is unable to learn the contents of the sealed envelope  prior to her act  of opening it. Bob seeks a \emph{binding} guarantee where the sealed contents can be verified by him to be tamper-proof, i.e., Alice cannot successfully cheat by revealing a different message. Given the nature of its functionality, commitment is a useful building block in several non-trivial cryptographic protocols, including secure multiparty computation~\cite{cramer2015secure}.  

Wyner's work on wiretap channels~\cite{wyner1} brought to the fore the value of noisy channels in realizing information-theoretic security.\footnote{In his pioneering work, Blum~\cite{blum} studied conditionally-secure commitment where parties are computationally bounded.} Cr{\'e}peau~\cite{crepeau_achieving_1988-1} first demonstrated the possibility of information-theoretic secure commitment over binary noisy channels. Commitment has subsequently been widely studied over several noisy channels. Winter \emph{et al.}~\cite{winter2003commitment} completely characterized the maximum commitment throughput or \emph{commitment capacity} over any discrete memoryless channels (DMCs). This result was extended to DMCs with arbitrary cost constraints in~\cite{mymb}, where an alternate dual commitment capacity characterization was also presented. Nascimento \emph{et al.} studied commitment over continuous alphabet additive white Gaussian noise (AWGN) channels in~\cite{nascimento-barros-t-it2008}. They showed the surprising result that the commitment capacity over any non-trivial  AWGN channel is infinite, even under a finite power constraint at the committer. Subsequently,~\cite{oggier2008practical} showed a constructive commitment scheme (using lattice codes) over AWGN channels.

Despite a noisy channel being an excellent resource for realizing commitment, one needs to be mindful of potential \emph{unreliability} in a channel-- oftentimes due to imprecise channel characterization (passive unreliability) or due to channel tampering by malicious parties (active unreliability). Channel unreliabilities impact the commitment throughput potential of a noisy channel;\footnote{In this work, we follow the \emph{game-based} security paradigm which is  different from an alternate   \emph{simulation-based} paradigm (cf.~\cite{canetti2001universally}). In the absence of computational limitations, however, it is known that the game-based security notion coincides with the simulation-based security notion.} in some cases, they may completely preclude commitment. Damg{\aa}rd \emph{et al.}~\cite{damgaard1999possibility} first studied channel unreliability via the \emph{unfair noisy channels} (UNC) over the binary alphabet. Their \emph{Binary UNC} combines both forms of unreliability, \textit{viz.,} active unreliability (when either of two parties are malicious) and passive unreliability (when both parties are honest). In a classic result,  Damg{\aa}rd \emph{et al.}~\cite{damgaard1999possibility} precisely characterized the threshold for positive-rate commitment over Binary UNCs; their commitment capacity was recently characterized in~\cite{crepeau2020commitment}, albeit under some assumptions.\footnote{The authors in~\cite{crepeau2020commitment} impose  restrictions on protocols for their converse.} Other works, for instance~\cite{wullschleger2009oblivious,khurana2016secure,budkuley2021commitment,budkuley2022reverse,yadav2021commitment,yadav2022commitment,budkuley2022commitment}, have studied other related models like the elastic/reverse elastic channels and compound channels respectively where the noisy channels exhibit either of these unreliabilities exclusively. 

The focus of our study is \emph{unreliable} Gaussian channels; to the best of our knowledge, this work presents the first results on their commitment capacity.\footnote{Although the authors in~\cite{nascimento-barros-t-it2008} principally study commitment over the classic AWGN channel, they have a short discussion on such unfair noisy channels, where they argue (without any formal proof) that their commitment throughput may be finite.} In particular, we study the \emph{Gaussian unfair noisy channel} or the \emph{Gaussian-UNC} with parameters $\gamma^2,\delta^2,$ where $0<\gamma^2\leq \delta^2;$ the definition follows. 

\begin{definition}[Gaussian Unfair Noisy Channel (Gaussian-UNC)]\label{def:gaussian:UNC}
A Gaussian Unfair Noisy Channel (GaussianUNC) with parameters  $0<\gamma^2\leq\delta^2$, also called Gaussian-UNC$[\gamma^2,\delta^2]$, is a noisy AWGN channel where (i) honest parties communicate over an AWGN channel where the noise variance can take values in the set $\cS=[\gamma^2,\delta^2]$ and is unknown to them, (ii) any dishonest party can privately set the noise variance to any value in $\cS$.\footnote{When $\gamma^2=\delta^2,$ the Gaussian UNC$[\gamma^2,\delta^2]$ specializes to an AWGN channel with i.i.d. zero mean Gaussian noise with a fixed variance $\gamma^2=\delta^2.$ The commitment capacity of such a channel is known to be infinite~\cite{nascimento-barros-t-it2008}. Furthermore, when $\gamma^2=\delta^2=0,$ the AWGN channel specializes to a noiseless channel whose commitment capacity is zero~\cite{blum}. Hence, $\gamma$, $\delta$ are so chosen.} Here, $E:=\delta^2-\gamma^2$ is called the  elasticity of the channel.
\end{definition} 
In such a Gaussian UNC$[\gamma^2,\delta^2],$ the independent and identically distributed (i.i.d.) noise random variable is a zero-mean white Gaussian noise whose variance can vary in the range $[\gamma^2,\delta^2].$ Furthermore, as in Binary UNCs, only malicious parties can privately set the variance to any value in the given range; honest parties are assumed to be `blind' to the instantiated variance value.\\ 

\indent\textbf{Contributions:}
The following are key contributions:
\begin{itemize}
\item We show that commitment is impossible over a Gaussian UNC[$\gamma^2,\delta^2$] under unconstrained inputs when $\delta^2\geq 2\gamma^2$ (cf.~Theorem \ref{thm:impos}). More generally, our impossibility characterization over Gaussian UNCs extends to some other secure multiparty functionalities, like oblivious transfer (cf. Corollary~\ref{cor:smc}).
\item We also present partial lower bounds on the commitment capacity (cf. Theorem~\ref{thm:possible:highSER}). Our results show that channels with non-zero elasticity at both Alice and Bob, may have \emph{finite} rate. 
A key takeaway from this work is the  precise characterization of the positive rate threshold when $P\rightarrow \infty$  for fixed $\gamma^2,\delta^2$ (cf. Theorem~\ref{thm:achieve:inftyP}). 
\item Our results present a new perspective, via elasticity, on the infinite capacity of classic AWGN channels. In particular, we show that the zero elasticity (at Alice and Bob) of the classic AWGN is the key reason for their infinite capacity.     
\end{itemize}
\vspace{2mm}

\noindent\textbf{Organization of paper:} In the following section~\ref{sec:not:pre}, we briefly present the basic notation following which we present our  problem setup in Section~\ref{sec:system:model}. 
The main results of this work  are presented in Section~\ref{sec:main:results}. We present the details of the converse proof and overview of achievability in Section~\ref{sec:proofs}.
Finally, we present the achievability protocol and prove its security in the Appendix.

\section{Notation and Preliminaries}\label{sec:not:pre}
%
Random variables are denoted by upper case letters (eg. $X$), their values  by lower case letters (eg., $x$), and their alphabets by calligraphic letters (eg. $\cX$). 
Random vectors and their accompanying values are denoted by boldface letters. 
For  natural number $a\in\mathbb{N}$, let $[a]:=\{1,2,\cdots, a\}$. 
%
%
Let $\mathbb{P}(A)$ denote the probability of event $A$. Deterministic (resp. random) functions will be denoted by lower case (resp. upper case) letters. Let $X\sim \text{Unif}(\cX)$ and $Y\sim \text{Bernoulli}(p)$ denote a uniform random variable over $\cX$ and a Bernoulli random variable $X$ with parameter $p\in[0,1]$. 
Given $P_\rx,Q_\rx \in \cP(\cX)$, 
SD$(P_{\rx},Q_\rx)$ denotes their statistical distance.
%

Let random variables $X,Y\in\cX\times\cY$, where $(X,Y)\sim P_{X,Y}$. The \emph{min-entropy} of $X$ is denoted by $H_{\infty}(X):=\min_{x\in\cX} \left(-\log(P_{X}(x))\right)$; the conditional version is given by $H_{\infty}(X|Y):=\min_{y} H_{\infty}(X|Y=y).$ For $\epsilon\in[0,1)$, the \emph{$\epsilon$-smooth min entropy} and its conditional version is given by:
$H_{\infty}^\epsilon(X):= \max_{X':||P_{X'}-P_{X}||\leq\hspace{1mm}\epsilon} H_{\infty}(X')$ and 
$H_{\infty}^\epsilon(X|Y):= \max_{X',Y':||P_{X',Y'}-P_{X,Y}||\leq\hspace{1mm}\epsilon} H_{\infty}(X'|Y')$ respectively. While the above notions are specified for discrete variables, similar results hold to continuous alphabets ; the discretization approach is used to extend some of these definitions (see appendix for details). We will present some of these later whenever necessary.

\removed{
Let $X \in \cX$ and $Y \in \mathbb{R}$ represent a discrete random variable and a continuous random variable respectively. For every $x \in \cX$, the conditional probability density function (PDF) $f_{Y|X}(y|x)$ is assumed to be Riemann integrable. Then, the PDF of $Y$ is given by 
\begin{align*}
    f_{Y}(y)=\sum_{x \in \cX}P_{X}(x)f_{Y|X}(y|x)
\end{align*}
is also Reimann Integrable.\\
Further, the conditional probability $P_{X|Y}(x|y)$ is given by
\begin{align*}
    P_{X|Y}(x|y)=\frac{P_{X}(x)f_{Y|X}(y|x)}{f_{Y}(y)}
\end{align*}
The conditional entropy of $X$ given the random variable $Y$ is given by:
\begin{align*}
    H(X|Y)=\int_{\infty}^{\infty}f_{Y}(y)\Bigg(\sum_{x \in \cX}P_{X|Y}(x|y)\log\frac{1}{P_{X|Y}(x|y)}\Bigg)dy
\end{align*}
}%
We also need universal hash functions and strong randomness extractors for our commitment scheme; see~\cite{bloch} for detailed definitions. Finally, we borrow a classic result (cf.~\cite{shannonc,Gallager} and our appendix later) on existence of spherical codes for a given value of minimum distance.
\removed{
We also need the following definitions:
\begin{definition}[$\xi$-Universal hash functions~\cite{bloch}] 
	Let $\mathcal{H}$ be a class of functions from $\cX$ to $\cY$. $\mathcal{H}$ is said to be $\xi-$universal hash function, where $\xi\in\mathbb{N}$, if when $h\in\mathcal{H}$ is chosen uniformly at random, then $(h(x_1),h(x_2),...h(x_{\xi}))$ is uniformly distributed over $\cY^{\xi}$, $\forall x_1,x_2,...x_{\xi} \in \cX$.
\end{definition}
\begin{definition}[Strong randomness extractors~\cite{bloch}]
	 A probabilistic polynomial time function of the form \text{Ext}: $\{0,1\}^n \times \{0,1\}^d \to \{0,1\}^m$ is an  $(n,k,m,\epsilon$)-strong extractor if for every probability distribution $P_{Z}$ on $\cZ=\{0,1\}^n$, and $H_{\infty}(Z)\geq k$, for random variables $D$ (called 'seed') and $M$, distributed uniformly in $\{0,1\}^d$ and $\{0,1\}^m$ respectively, we have $||P_{Ext(Z;D),D}- P_{M,D}|| \leq \epsilon$.
\end{definition}
}
\removed{
\subsection{Error Correcting Spherical Codes over Euclidean Space}
For the classic AWGN channel, an \emph{error correcting code} $\cC\subseteq \mathbb{R}^n$ under transmit power constraint $P>0$ comprises the encoder-decoder pair $(\psi,\phi),$ where $\psi:\{0,1\}^m\rightarrow \mathbb{R}^n,$ where $\|\psi(u^m)\|\leq \sqrt{nP}, \forall u^m\in\{0,1\}^m,$  and $\phi:\mathbb{R}^n\rightarrow \{0,1\}^m\bigcup \{0\},$ where $\{0\}$ denotes decoding error.  The \emph{rate}\footnote{We will use an over-bar to indicate the rate of an error correcting code. For the rate of a commitment protocol, there will be no such over-bar.} of the error correcting code $\cC$ is given by $\bar{R}(\cC)=\frac{1}{n}\log(|\cC|).$ The \emph{minimum distance of code $\cC$} is  $d_{\min}(\cC):=\min_{\bx\neq\bx'\in\cC} \|\bx-\bx'\|,$ where $\|\bx-\bx'\|$ is the $\ell_2$ distance between real vectors $\bx,\bx'\in\cC.$ The following classic result~\cite{shannonc,Gallager} specifies the relation between a specified minimum distance $d_{\min}$ and the existence of a \emph{spherical code} (where all codewords have identical $\ell_2$ norms)  comprising exponentially-many codewords whose minimum distance is $d_{\min}.$ 
\begin{lemma}~\label{thm:ecc}
Let $\hat{d}\in(0,1)$ and $P>0.$ Then, for $n$ sufficiently large there exists a spherical code $\cC\subseteq\mathbb{R}^n$ with $\|\bx\|=\sqrt{nP},$ $\forall \bx\in\cC,$ and where the minimum distance of the code
\begin{align}
d_{\min}(\cC)=n{\hat{d}}^2 P
\end{align}
and the rate of the code
\begin{align}\label{eq:rate:code}
\bar{R}(\cC)=-\frac{1}{2}\log\left(1-\left(1-\frac{\hat{d}}{2}\right)^2 \right).
\end{align}
Furthermore, for any measurable subset $\cB\subseteq \mathcal{S}\mathscr{h}(0, \sqrt{nP}),$ the number of codewords from $\cC$ within $\cB$ is proportional to the volume of $\cB$, in particular, it is upper bounded as
\begin{align}
|\cC \cap \cB |\leq 2^{nR} \frac{\texttt{Vol}(\cB)}{\texttt{Vol}(\mathcal{S}(0, \sqrt{nP}))}.
\end{align}
\end{lemma}
Here $\texttt{Vol}(\cdot)$ denotes the Lebesgue measure ($n$-dimensional), and $\cS(0,\sqrt{nP}):=\{\bz\in\mathbb{R}^n:\|\bz\|\leq \sqrt{nP}\}$ is a Euclidean $n$-dimensional ball centered at origin $0$ comprising all vectors of $\ell_2$ norm at most $\sqrt{nP}.$ 
}

%
\section{System Model and Problem Description}\label{sec:system:model}
%
In our problem (ref Fig.~\ref{fig:main:setup}), two mutually distrustful parties, \emph{committer} Alice and \emph{receiver} Bob employ a \gunc \ to realize commitment over a uniformly random string $C\in[2^{nR}]$ available to Alice (we specify $R>0$ later). Alice and Bob have access to a one-way \gunc \ with two-sided elasticity $E$ at both Alice and Bob.
Separately, as is common in such cryptographic primitives, we also assume that Alice and Bob can interact over a two-way link that is noiseless and where the interaction is public and fully authenticates the transmitting party.
To commit to her random string $C,$ Alice uses the \gunc \ channel $n$ times and transmits over it her encrypted data $\bX=(X_1,X_2,\cdots, X_n)\in\mathbb{R}^n$; Bob receives a noisy version $\bY\in\mathbb{R}^n$ of Alice's transmission $\bX.$ Alice has an input power constraint $P>0,$ i.e., Alice can only transmit vectors $\bX\in\cS(P),$ where $\cS(P):=\{\bx\in\mathbb{R}^n: {\|\bx\|}_2\leq \sqrt{nP} \}$.  We allow private randomization at both Alice and Bob via their respective keys $K_A\in\mathcal{K_A}$ and $K_B\in\mathcal{K_B}.$ At any point in time, say time $i$, Alice and Bob can also exchange messages over the public, noiseless link prior to transmitting $X_i$; let $M$ denote the entire collection of messages exchanged over the noiseless link. We call $M$ the \emph{transcript} of the protocol. It is important to note that we assume that any point in time during the protocol, the transmissions of Alice and/or Bob can depend \emph{causally} on the information previously available to them. 

\begin{figure}[!ht]
  \begin{center}
		\includegraphics[trim=3.5cm 6cm 5cm 0cm, scale=0.45]{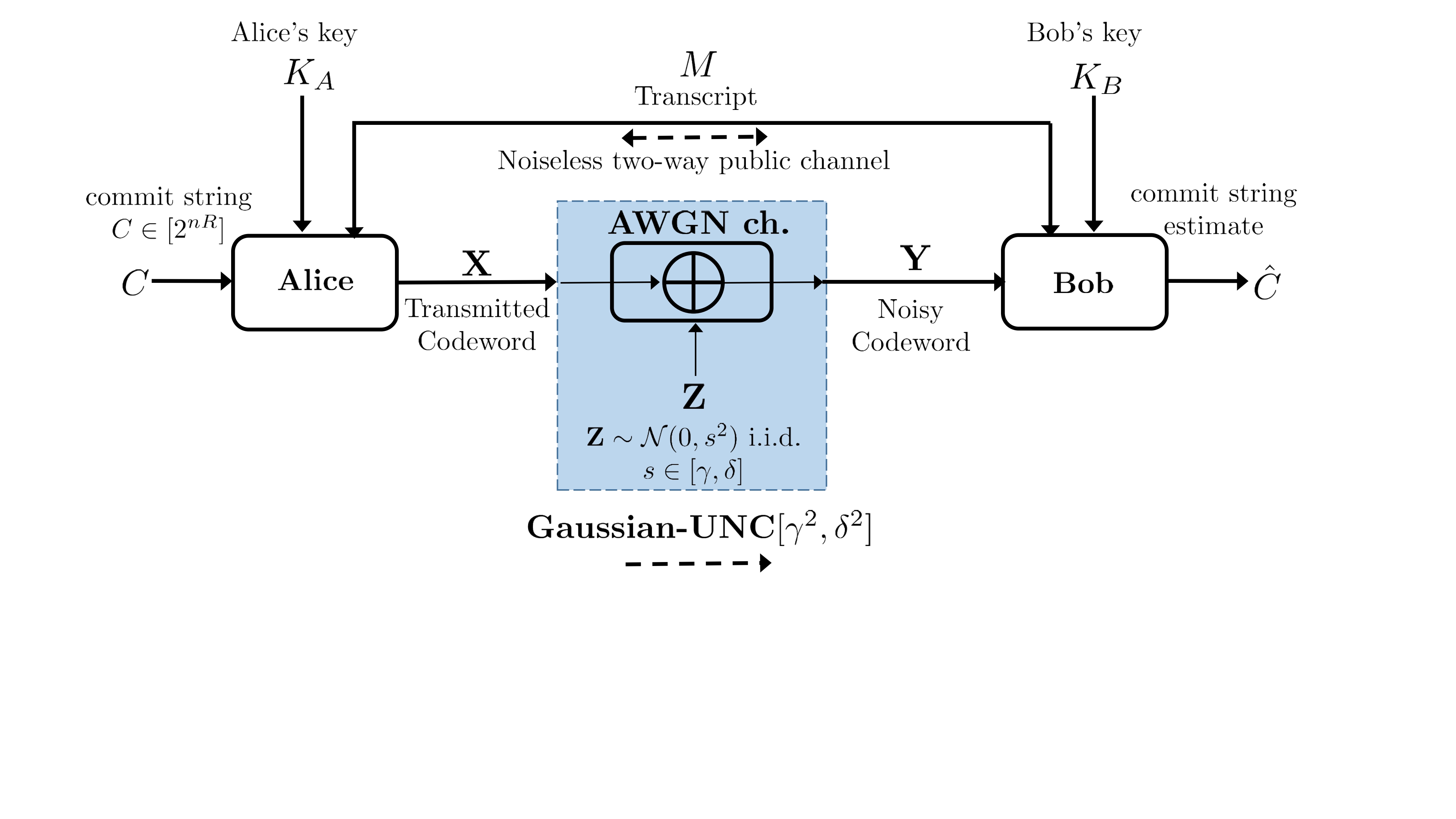}
    \caption{The problem setup: commitment over a  Gaussian UNC$[\gamma^{2},\delta^{2}]$}
    \label{fig:main:setup}
  \end{center}
\end{figure}
We formally define an $(n,R)$-commitment protocol.
\begin{definition}[Commitment protocol]\label{def:commitmentprotocol}
An $(n,R)-$commitment protocol $\mathscr{P}$ is a message-exchange procedure between the two parties to realize commitment over the random bit string  $C\in[2^{nR}].$ We call $R$ the rate of such a commitment protocol $\mathscr{P}.$ There are two phases to a  protocol $\mathscr{P}:$ \emph{commit phase} followed by the \emph{reveal phase.} \\
\indent (a) Commit phase: Given $C\in[2^{nR}]$, Alice transmits $\bX\in\cS(P)$ using the \gunc \ channel $n$ times. Bob receives $\bY.$ The two parties also exchange messages over the noiseless link. Let $M$ denote this transcript of protocol $\mathscr{P}$.\footnote{We assume that the transcript may contain arbitrarily large, though finite, messages.} Let $V_A$ and $V_B$ denote \emph{Alice's view} and \emph{Bob's view} respectively; these  `views' comprise all the random variables/vectors known to the two parties at the end of the commit phase.\\
\indent (b) Reveal phase: In this phase, Alice and Bob only communicate over the noiseless public link and do not use the \gunc \ \footnote{In some works, the noisy channel is leveraged to first realize the noiseless link and subsequently, the rate of the commitment protocol is suitably defined by amortizing the size of the commit string w.r.t. the total number of uses of the noisy channel. We do not study such a definition of rate in this work.}. Alice announces the commit string $\tilde{c}\in[2^{nR}]$ and $\td{\bX}\in\mathbb{R}^n.$ Bob then performs a test $T(\td{c},\td{\bX},V_B)$ and either accepts (by setting $T=1$) the commit string $\td{c}$ or reject it (by setting $T=0$).
\end{definition}

For a given $(n,R)-$commitment protocol, the following parameters are of key interest:
\begin{definition}[$\epsilon$-sound]\label{def:soundness}
A protocol $\mathscr{P}$ is $\epsilon-$sound if, for  honest Alice and  honest Bob,
\begin{align}
\mathbb{P}\left(T(C,\bX,V_B)\neq 1\right)\leq \epsilon.\notag
\end{align}
\end{definition}
\begin{definition}[$\epsilon$-concealing]\label{def:concealing}
A protocol $\mathscr{P}$ is $\epsilon-$concealing if, for honest Alice and under any strategy of Bob,
\begin{align}
I(C;V_B) \leq \epsilon.\notag
\end{align}
\end{definition}
\begin{definition}[$\epsilon$-binding]\label{def:binding}
A protocol $\mathscr{P}$ is $\epsilon-$binding if, for  honest Bob and under any strategy of Alice,
\begin{equation*}
\mathbb{P}\Big(T(\bar{c},\bar{\bx},V_B)=1 \quad\& \quad T(\hat{c},\hat{\bx},V_B)=1)\leq \epsilon\label{eq:binding}
\end{equation*}
for any two pairs $(\bar{c},\bar{\bx})$, $(\hat{c},\hat{\bx})$, $\bar{c}\neq \hat{c},$ and $\bar{\bx},\hat{\bx}\in\cS(P)$.
\end{definition}
A rate $R$ is said to be \emph{achievable} if for every $\epsilon>0$ there exists for every $n\in\mathbb{N}$ sufficiently large a protocol $\mathscr{P}$ such that  $\mathscr{P}$  is $\epsilon-$sound, $\epsilon-$concealing and $\epsilon-$binding. The supremum of all achievable rates is called the commitment capacity $\mathbb{C}$ of the \gunc \ . 

\section{Our Main Results}\label{sec:main:results}
We first present an impossibility result.
\begin{theorem}[Impossibility]\label{thm:impos}
For a Gaussian UNC$[\gamma^2,\delta^2],$ with unconstrained input ($P \to \infty$), the commitment capacity $\mathbb{C}=0$ if $\delta^2\geq 2\gamma^2$. 
\end{theorem}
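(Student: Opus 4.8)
The plan is to prove impossibility by exhibiting a cheating strategy for one of the parties that defeats every protocol whenever $\delta^2 \ge 2\gamma^2$. Following the standard UNC-style argument (as in Damg{\aa}rd \emph{et al.}~\cite{damgaard1999possibility}), the key observation is that the elasticity allows a malicious party to create a ``simulated'' transcript that is statistically indistinguishable from an honest one at the honest party's end, thereby breaking the joint hiding/binding requirement. Concretely, I would exploit the fact that a zero-mean Gaussian with variance $\delta^2$ is distributed exactly as the sum of two independent zero-mean Gaussians with variances $\gamma^2$ and $\delta^2 - \gamma^2$; when $\delta^2 \ge 2\gamma^2$ we have $\delta^2 - \gamma^2 \ge \gamma^2$, so in fact a variance-$\delta^2$ channel can be written as a variance-$\gamma^2$ channel followed by \emph{another} variance-$\gamma^2$ (or larger) additive perturbation. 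This ``channel splitting'' is the crux: it lets a malicious Bob, who has privately set the noise to the small value $\gamma^2$, locally add extra noise to land on the distribution honest Bob would see at $\delta^2$; and symmetrically it lets a malicious Alice who faces $\delta^2$ noise pretend she is in an execution where the noise was $\gamma^2$.

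First I would set up the two candidate attacks and argue that at least one must succeed against any $\epsilon$-sound, $\epsilon$-concealing, $\epsilon$-binding protocol $\mathscr{P}$. A malicious Alice sets the channel noise to $\gamma^2$ during the commit phase; because honest Bob cannot distinguish this from the noise being $\delta^2$, and because by the concealing property Bob's view $V_B$ carries essentially no information about $C$, Alice can — by the channel-splitting identity — post-hoc explain the same $V_B$ as the outcome of a commit-phase run for a \emph{different} commit string $\hat c \ne \bar c$, simply by choosing the alternate ``revealed'' codeword $\hat{\bx}$ so that $V_B$ is a plausible noisy image of $\hat{\bx}$ under variance $\delta^2$. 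To make this precise I would argue: (i) by $\epsilon$-concealing, the distribution of $V_B$ conditioned on $C = \bar c$ is within $O(\sqrt{\epsilon})$ statistical distance of its distribution conditioned on $C = \hat c$ (Pinsker-type bound from $I(C; V_B) \le \epsilon$); (ii) honest Bob's acceptance test $T$ must accept the honest $(\bar c, \bar{\bx})$ with probability $\ge 1 - \epsilon$; (iii) since honest Bob cannot tell which noise variance in $[\gamma^2, \delta^2]$ was used, and since $\delta^2 \ge 2\gamma^2$ means the ``gap'' noise $\delta^2 - \gamma^2$ is itself at least $\gamma^2$, malicious Alice can simulate an entire alternate commit transcript for $\hat c$ whose induced $V_B$ matches the real one, forcing $T(\hat c, \hat{\bx}, V_B) = 1$ with probability bounded away from $0$ — contradicting $\epsilon$-binding. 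The symmetric alternative is a malicious Bob who sets the noise to $\gamma^2$ to learn more than an honest Bob about $C$, again using the splitting to keep himself indistinguishable from honest Bob to honest Alice; one of these two must break the protocol, and the $P \to \infty$ hypothesis is used to ensure Alice's power budget never obstructs the construction of the alternate codeword $\hat{\bx}$.

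The main technical obstacle, and where I would spend the most care, is handling the \emph{interactive, causal} nature of the protocol: the transcript $M$ over the noiseless link and the views $V_A, V_B$ are built up round-by-round with each transmission possibly depending on all prior messages, so the ``channel splitting'' has to be applied consistently across all $n$ uses while the malicious party maintains a coherent simulated history. I would address this by a hybrid/simulation argument: define an intermediate experiment in which the malicious party runs the honest protocol on the ``inner'' variance-$\gamma^2$ channel and then, at reveal time, re-samples the residual variance-$(\delta^2 - \gamma^2)$ noise to reinterpret the commit-phase outcome; show via a triangle inequality over rounds that the honest party's view in this experiment is $O(n\epsilon)$-close (or, after taking $\epsilon \to 0$ before $n \to \infty$ in the right order, arbitrarily close) to its view in a genuine honest execution. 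A secondary subtlety is that the definitions are stated for discrete alphabets (min-entropy, statistical distance) while the channel is continuous; I would invoke the discretization remark already flagged in Section~\ref{sec:not:pre} to carry the entropy/indistinguishability bounds over to $\mathbb{R}^n$, choosing the discretization fine enough that the Gaussian splitting identity is preserved up to negligible error. Finally I would note that the same argument only needs $V_B$ (resp. $V_A$) to be near-independent of $C$ in the concealing case and near-deterministic in the binding case, so the contradiction is robust to the precise form of $T$; this is what yields $\mathbb{C} = 0$ rather than merely an upper bound on the rate.
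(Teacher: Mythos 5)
Your key idea---writing a variance-$2\gamma^2$ Gaussian as the sum of two independent variance-$\gamma^2$ Gaussians, and noting that $2\gamma^2\in[\gamma^2,\delta^2]$ exactly when $\delta^2\ge 2\gamma^2$, so that a cheating party can privately set the channel to $\gamma^2$ and supply the other half of the noise locally---is precisely the mechanism the paper uses. The packaging differs, though. The paper does not attack the security definitions directly; it proves a reduction (Lemma~\ref{lem:reduction:noiseless}): every single use of the channel is \emph{perfectly} simulated by one-way noiseless transmissions (Alice sends $W=x+Z_1$ with $Z_1\sim\mathcal{N}(0,\gamma^2)$ over the noiseless link, Bob adds $Z_2\sim\mathcal{N}(0,\gamma^2)$), it verifies case by case that honest and cheating behaviours induce identical joint output distributions in the simulated and real channels, and it then invokes the known quantitative impossibility of commitment over purely noiseless interaction (Claim~\ref{clm:impossible:noiseless}). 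This modularity also yields the impossibility of other functionalities such as oblivious transfer essentially for free, and it clarifies where $P\to\infty$ enters: the transferred cheating transmission $x+Z_1$ has unbounded support, so a finite power constraint would obstruct mapping attacks on the simulation back to attacks on the real channel.

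The genuine gap in your route is the equivocation step. You assert that a malicious Alice can ``post-hoc explain the same $V_B$'' as a commit-phase run for $\hat c\neq\bar c$ by choosing $\hat{\bx}$ so that $V_B$ is a plausible noisy image of $\hat{\bx}$; but Alice never observes $V_B$ (she never sees $\by$), so she cannot condition on it, and for an arbitrary interactive protocol there is no a priori reason she can sample a consistent alternate history for $\hat c$ from her own view. What makes this provable is exactly the simulation viewpoint: once Alice generates $Z_1$ herself and treats $W=\bx+Z_1$ as the value delivered to Bob, Bob's entire view becomes a function of the public transcript, of $W$ (both known to Alice), and of Bob's independent local randomness $Z_2$---the structure under which the standard noiseless dichotomy (``either Bob can learn $C$ or Alice can equivocate,'' your ``one of the two attacks must succeed'') actually goes through. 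Your hybrid-over-rounds plan is the right instinct for handling interactivity, but without first establishing the exact per-use simulation, including the transfer of \emph{arbitrary} cheating strategies in both directions, your steps (i)--(iii) do not close; note also that the Gaussian splitting identity is exact, so the simulation is perfect rather than $O(n\epsilon)$-close.
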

%
The proof of this zero-rate converse is given in Section~\ref{sec:proofs}. 
\begin{remark}
    In the proof, we simulate the noisy channel functionality entirely via  noiseless interactions between the parties, and crucially leverage the impossibility of commitment (cf.~\cite{blum}) under such interactions. An interesting consequence of our proof, also stated as a corollary later (cf. Corollary~\ref{cor:smc}), is that similar impossibility can be argued for more general multi party functionalities, for e.g., oblivious transfer, over Gaussian UNCs for identical parameters.
\end{remark}
%
Having understood the impossibility regime (partly), we now flip the question and seek to explore positive-rate commitment schemes. Unlike the impossibility result where the inputs were unconstrained, we give an achievability rate as a function of input power constraint $P$. 
We now state our first result. Due to space constraints, the proof details of all the following results are in the appendix.
\begin{theorem}\label{thm:possible:highSER}
For a Gaussian UNC$[\gamma^2,\delta^2]$ with $P>0$, positive-rate commitment  is possible if the following holds:
\begin{align}\label{eq:thresh}
\delta^2 < \left(1+\frac{P}{P+\gamma^2}\right)\gamma^2.
\end{align}
The commitment capacity $\mathbb{C}\geq \mathbb{C}_L$ where
%
%

%
\begin{align}\label{eq:C:L}
 \mathbb{C}_L:=\frac{1}{2}\log\left(\frac{P}{\delta^2-\gamma^2}\right)-\frac{1}{2}\log\left(1+\frac{P}{\gamma^2}\right)
\end{align}
\end{theorem}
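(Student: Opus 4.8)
The plan is to construct an explicit $(n,R)$-commitment protocol achieving rate arbitrarily close to $\mathbb{C}_L$ whenever the threshold \eqref{eq:thresh} holds, and verify soundness, concealment, and binding. The architecture is the standard one for channel-based commitment: in the commit phase, Alice draws a codeword $\bX$ from a spherical code $\cC \subseteq \cS(\sqrt{nP})$ (Lemma~\ref{thm:ecc}) and sends it through the Gaussian UNC; Bob receives $\bY$. Alice then sends over the noiseless link a hash/syndrome of $\bX$ together with a universal-hash image of the commit string $C$ masked by an extractor applied to $\bX$. In the reveal phase Alice announces $(\tilde c, \btdX)$ and Bob checks (i) that $\btdX \in \cC$, (ii) that the announced hash matches, and (iii) that $\btdX$ is consistent with his received $\bY$ in the sense that $\|\btdX - \bY\|^2 \le n\theta^2$ for a suitably chosen threshold $\theta^2$ slightly above $\delta^2$.

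First I would fix the geometry. Choose the normalized minimum distance $\hat d$ of the spherical code so that its rate \eqref{eq:rate:code} exceeds the target, and so that the minimum squared distance $n\hat d^2 P$ strictly exceeds $4n\theta^2$ — i.e., balls of radius $\sqrt{n\theta^2}$ around distinct codewords are disjoint. Here $\theta^2$ is picked in the open interval $(\delta^2, (1 + \tfrac{P}{P+\gamma^2})\gamma^2)$, which is nonempty exactly under \eqref{eq:thresh}; the upper end is what controls the adversarial decoding radius (see below). Soundness is then immediate from concentration: when both parties are honest, the noise variance lies in $[\gamma^2,\delta^2]$, so $\|\bY-\bX\|^2 \le n\delta^2(1+o(1)) < n\theta^2$ with probability $1 - e^{-\Omega(n)}$, and Bob's test passes except with vanishing probability. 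Concealment follows the Winter--Nascimento template: against any (cheating) Bob, the residual uncertainty about $\bX$ after observing $\bY$ — quantified by a smooth conditional min-entropy computed in the discretized model, using that even the best adversarial noise variance $\gamma^2$ leaves a sphere-packing ambiguity of rate $\approx \tfrac12\log(1+P/\gamma^2)$ about which codeword was sent — is large enough that the extractor output is statistically close to uniform and independent of $C$; this is where the $-\tfrac12\log(1+P/\gamma^2)$ term in $\mathbb{C}_L$ enters.

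The main obstacle is the binding proof, and it is where the first term $\tfrac12\log\!\big(\tfrac{P}{\delta^2-\gamma^2}\big)$ and the threshold \eqref{eq:thresh} come from. A cheating Alice, having produced a transcript (in particular a fixed syndrome/hash value and a received-side history), wants to later open to two distinct codewords $\bar\bx \ne \hat\bx$, both of which must pass Bob's consistency test against the \emph{same} $\bY$, i.e., both must lie in the ball $\cS(\bY, \sqrt{n\theta^2})$. Since Bob's channel during the commit phase has noise variance at least $\gamma^2$ (Alice cannot make it smaller), $\bY$ is essentially $\bX^\ast$ plus Gaussian noise of variance $\ge \gamma^2$ for whatever $\bX^\ast$ Alice actually transmitted; concentrating, $\|\bY - \bX^\ast\|^2 \approx n\sigma^2$ for some $\sigma^2 \in [\gamma^2,\delta^2]$. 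I would argue that the set of codewords within distance $\sqrt{n\theta^2}$ of such a $\bY$ is contained in a thin spherical shell, and invoke the volume bound in Lemma~\ref{thm:ecc} ($|\cC \cap \cB| \le 2^{n\bar R}\,\mathrm{Vol}(\cB)/\mathrm{Vol}(\cS(0,\sqrt{nP}))$) to show that the number of such "openable" codewords is at most $2^{n(\bar R - \frac12\log(P/(\delta^2-\gamma^2)) + o(1))}$; equivalently, the fraction of Alice's hash space that admits a double-opening is exponentially small provided the hash/syndrome has rate at least $\bar R - \tfrac12\log(P/(\delta^2-\gamma^2))$. Choosing the syndrome length accordingly, and using that a uniformly random universal hash maps the (exponentially small) bad set to a collision with the committed value only with exponentially small probability, gives $\epsilon$-binding. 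The commit-string rate that survives is $\bar R$ minus the syndrome rate minus the extractor seed overhead, which optimizes to $\mathbb{C}_L$; requiring that this be positive is precisely \eqref{eq:thresh}. The delicate points to get right are: (a) the concentration arguments must be uniform over the adversary's choice of a (possibly non-i.i.d., causal) noise sequence in $[\gamma^2,\delta^2]^n$ — handled by a union bound over a fine grid of variance values plus a martingale/typicality argument; (b) the min-entropy bookkeeping must be done in a discretized model and then transferred back, as flagged in the preliminaries; and (c) the transcript $M$ and keys $K_A,K_B$ must be accounted for in both the concealment and binding entropies, which I would absorb by conditioning throughout and charging their (bounded) rate to the overhead.
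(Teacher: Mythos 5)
Your protocol architecture (spherical code, consistency test of the revealed codeword against $\bY$, hash leakage, extractor-based OTP for concealment) matches the paper's, and your soundness and concealment sketches — including the discretization caveat and the $\tfrac12\log(1+P/\gamma^2)$ equivocation term — are essentially the argument the paper runs. The gap is in binding. You count the set $\cA$ of codewords openable against a fixed $\bY$ as $|\cA|\approx 2^{n(\bar R+\frac12\log(E/P))}$ (correct, this is the paper's Claim~3) and then claim that a single universal hash of rate $\bar R+\tfrac12\log(E/P)$ suffices because "a uniformly random universal hash maps the bad set to a collision with the committed value only with exponentially small probability." That last step analyzes collisions with a \emph{fixed} committed value, but a cheating Alice is not bound to one: she announces the hash value after seeing the hash function, so she succeeds if \emph{any} pair in $\cA$ collides. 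With $|\cA|$ elements hashed into a range of size $|\cA|$, the expected number of colliding pairs is about $|\cA|/2$, i.e.\ exponentially large, so such a pair exists with probability near one and binding fails. Fixing this with a single hash forces the range up to $|\cA|^2$ (birthday bound), doubling the leakage and destroying the rate $\mathbb{C}_L$. This is exactly the obstruction the paper flags; its remedy is a \emph{two-round} hash challenge: a first hash drawn from a $3n\bar R$-wise independent family whose range barely exceeds $|\cA|$, combined with a concentration bound for $l$-wise independent sums (Rompel), shows that w.h.p.\ no hash value has more than $O(n\bar R)$ preimages in $\cA$; a second, $O(n\beta_2)$-bit 2-universal hash then kills the surviving polynomial-size set by the birthday bound at negligible extra leakage. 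Without this (or an equivalent device), your claimed rate is not established.

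Two secondary points. First, your geometric setup requires $d_{\min}^2(\cC)>4n\theta^2$ with $\theta^2>\delta^2$, i.e.\ disjoint decoding balls — but then Bob's consistency region contains at most one codeword, contradicting your own (and the paper's) count of exponentially many openable codewords; in fact the protocol needs $\bar R$ large enough that Bob's list is exponentially large, otherwise concealment fails. The minimum distance is used for a different purpose (controlling the geometry of the confusable set), not for unique decodability. Second, the hash functions must be chosen by Bob (or fresh public randomness) \emph{after} Alice's channel transmission; your phrasing has Alice sending "a hash/syndrome of $\bX$," and a hash function known to Alice in advance lets her precompute colliding pairs, again breaking binding.
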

See overview in Section ~\ref{sec:proofs:achieve:highSER};  details are in the appendix.
\begin{remark}
The proof of this theorem uses a novel approach but crucially borrows ideas both from the protocol of Cr{\'e}peau \emph{et al.}~\cite{crepeau2020commitment} for binary UNCs as well as  that of Nascimento \emph{et al.}~\cite{nascimento-barros-t-it2008} for classic AWGN channels.  The `skeleton' of our protocol, where we use an error correcting code with certain minimum distance guarantee, is similar to that in the latter. However, to handle adversaries who may benefit from the channel elasticity available (unlike in classic AWGN channels which lack elasticity), we `robustify' our protocol by using  an appropriate hash function challenge mechanism inspired by the protocol in Cr{\'e}peau \emph{et al.}~\cite{crepeau2020commitment}.
\end{remark}
\begin{figure}[!ht]
  \begin{center}
    \vspace{5pt}
		\includegraphics[scale=0.5]{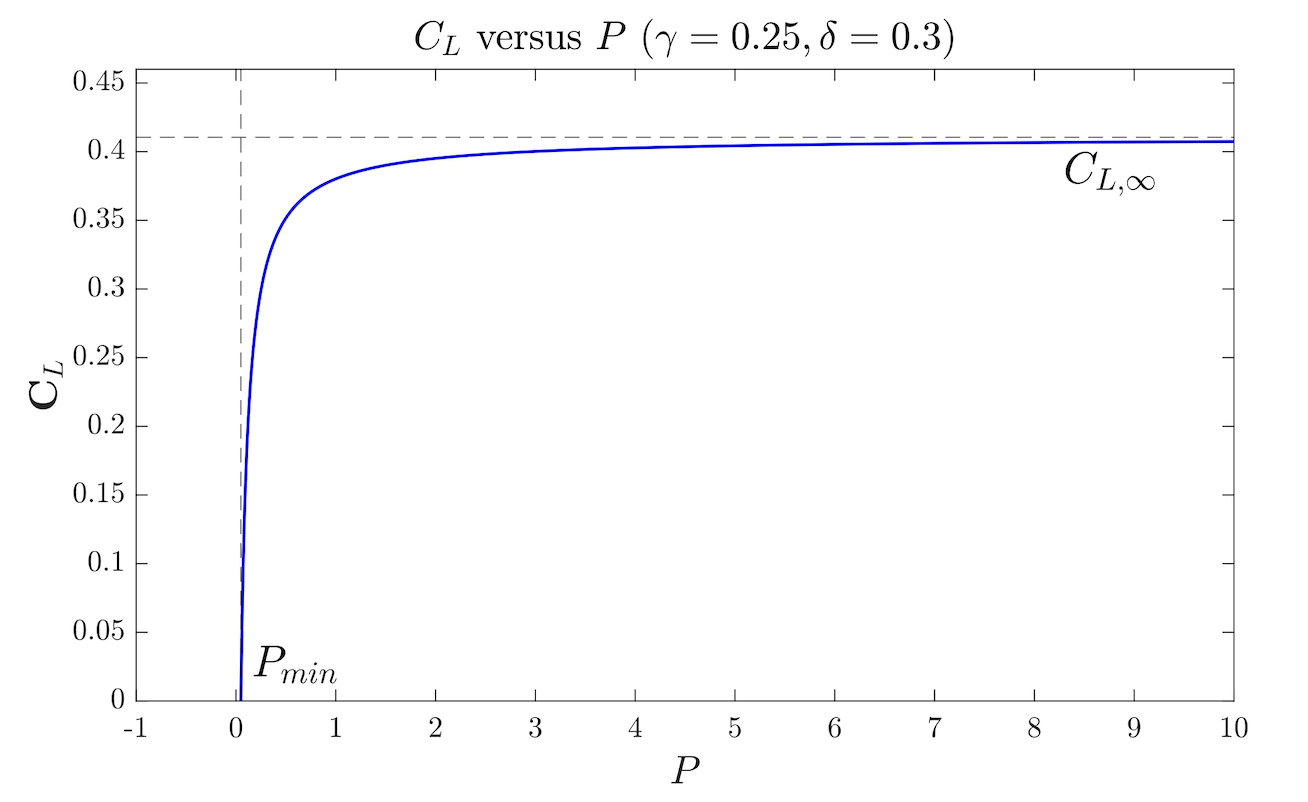}
    \caption{Plot of $\mathbb{C}_L$ versus $P$. For given $\gamma,\delta,$ $\mathbb{C}_L=0$ when $P\leq P_{\min}$. Also, $\mathbb{C}_L$ saturates to a finite value  $\mathbb{C}_{L,\infty}$ as $P$ increases.}
    \label{fig:lower:bound:P}
  \end{center}
\end{figure}
In Fig.~\ref{fig:lower:bound:P}, we present a representative plot of the lower bound $\mathbb{C}_L$ as a function of $P$ (for given values of $\gamma, \delta$). It is interesting to note that $\mathbb{C}_L$ becomes positive only for $P>P_{\min}:=\frac{\gsq (\dsq - \gsq)}{2\gsq - \dsq}$ and then saturates to $\mathbb{C}_{L,\infty}$ (as a concave function of $P$) as $P$ becomes large. Seen from another perspective, where we explore the variation of $\mathbb{C}_{L,\infty}$ w.r.t. the channel elasticity (cf. Fig.~\ref{fig:lower:bound:infty}), one notices that $\mathbb{C}_{L,\infty}$ is zero when the elasticity is sufficiently large, i.e., for fixed $\delta^2,$ when  $\gsq\leq\frac{\dsq}{2}$ (cf. Theorem~\ref{thm:impos}). 
\begin{figure}[!ht]
  \begin{center}
    \vspace{7pt}
		\includegraphics[scale=0.5]{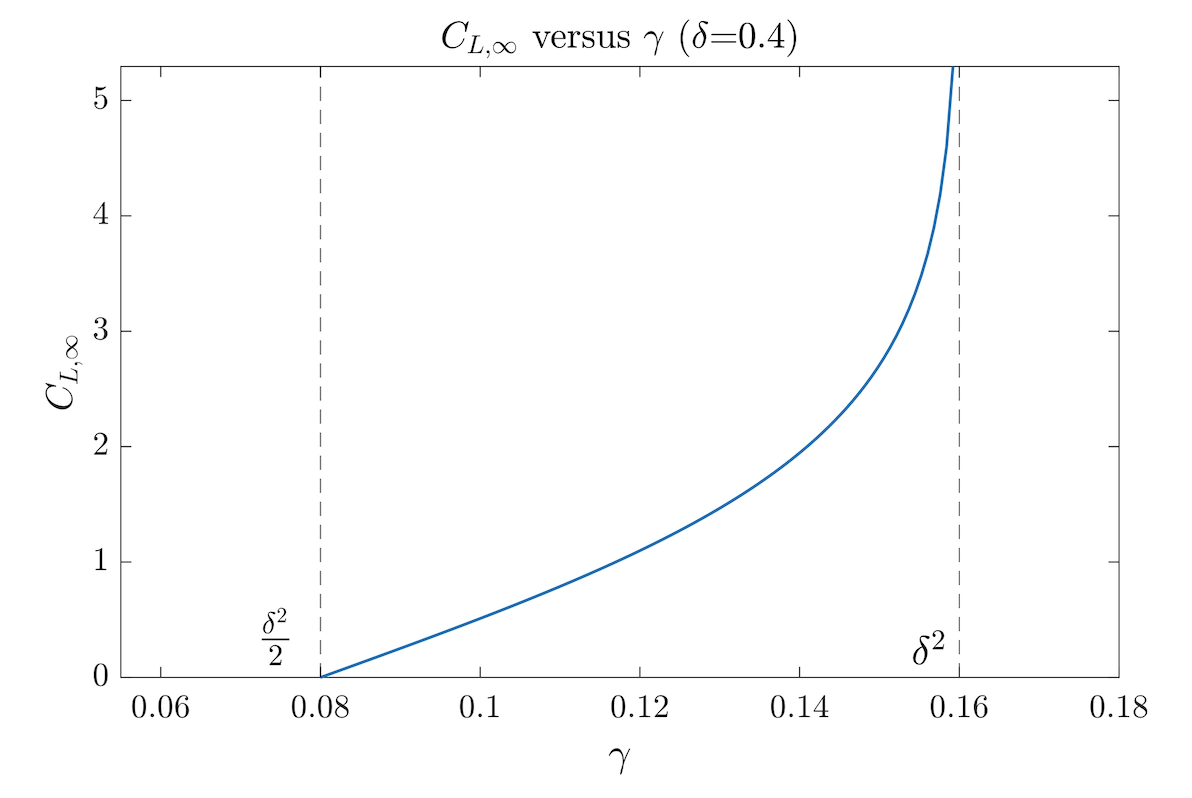}
    \caption{Plot of limiting value of lower bound $\mathbb{C}_L$ (as $P\rightarrow \infty$)  versus $\gamma^2$ for fixed $\delta^2$. The limiting lower bound $\mathbb{C}_{L,\infty}$, and hence, $\mathbb{C}$ tends to infinity as elasticity $E=\delta^2-\gamma^2\to 0$.}
    \label{fig:lower:bound:infty}
  \end{center}
\end{figure}
For fixed  $\gamma^2,\delta^2,$ 
the  positive-rate threshold in~\eqref{eq:thresh} `shifts' monotonically as a function of $P$ towards the impossibility bound  of $\dsq \geq 2 \gsq$ of Theorem~\ref{thm:impos}. In fact, in the limit $P\rightarrow \infty,$ the two bounds meet exactly, thereby allowing us to characterize precisely the positive commitment rate threshold when the input is unconstrained. We state this result in the following theorem.
\begin{theorem}\label{thm:achieve:inftyP}
Fix $\gamma^2,\delta^2<\infty$ and let $P\rightarrow \infty.$ Then commitment is possible if and only if $\delta^2< 2\gamma^2.$
\end{theorem}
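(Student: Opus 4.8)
\textbf{Proof plan for Theorem~\ref{thm:achieve:inftyP}.}

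The plan is to obtain the ``if and only if'' statement by combining the two one-sided bounds already in hand and taking the limit $P\to\infty$. The ``only if'' direction is immediate from Theorem~\ref{thm:impos}: that theorem shows $\mathbb{C}=0$ whenever $\delta^2\geq 2\gamma^2$ with unconstrained input, so commitment being possible forces $\delta^2<2\gamma^2$. Hence all the work is in the ``if'' direction --- showing that $\delta^2<2\gamma^2$ suffices once we are allowed to take $P$ as large as we please.

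For the ``if'' direction I would start from Theorem~\ref{thm:possible:highSER}, which guarantees positive-rate commitment whenever
\begin{align}\label{eq:plan:thresh}
\delta^2 < \left(1+\frac{P}{P+\gamma^2}\right)\gamma^2 .
\end{align}
The right-hand side of~\eqref{eq:plan:thresh} is monotonically increasing in $P$ (since $P/(P+\gamma^2)$ is increasing in $P$), and as $P\to\infty$ the factor $P/(P+\gamma^2)\to 1$, so the right-hand side converges (from below) to $2\gamma^2$. Therefore, given any fixed $\gamma^2,\delta^2$ with $\delta^2<2\gamma^2$, there exists a finite threshold power $P^\star=P^\star(\gamma^2,\delta^2)$ such that~\eqref{eq:plan:thresh} holds for all $P>P^\star$; explicitly one solves $\delta^2 = (1 + P/(P+\gamma^2))\gamma^2$ for $P$, which is the quantity $P_{\min}=\gamma^2(\delta^2-\gamma^2)/(2\gamma^2-\delta^2)$ appearing in the discussion after Theorem~\ref{thm:possible:highSER} (this is finite and positive precisely because $\gamma^2<\delta^2<2\gamma^2$). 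For any such $P>P^\star$, Theorem~\ref{thm:possible:highSER} yields an achievable rate $\mathbb{C}_L(P)>0$, and since the regime ``$P\to\infty$'' in the theorem statement permits us to pick any sufficiently large power, commitment is possible. One may additionally remark that $\mathbb{C}_L(P)$ increases with $P$ to the finite limit $\mathbb{C}_{L,\infty}$, so positive rate is not just attained but stable in the large-power limit.

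I do not expect a genuine obstacle here, since the theorem is essentially a ``closing of the gap'' between the converse of Theorem~\ref{thm:impos} and the achievability threshold of Theorem~\ref{thm:possible:highSER} in the $P\to\infty$ limit; both hard results are already established. The only point requiring a little care is the interpretation of ``$P\to\infty$'': the converse (Theorem~\ref{thm:impos}) is stated for genuinely unconstrained input, whereas achievability is stated for each finite $P$. One should state explicitly that ``commitment is possible as $P\to\infty$'' means there exists a finite $P$ (equivalently, all sufficiently large $P$) for which the achievability protocol of Theorem~\ref{thm:possible:highSER} succeeds --- and conversely that if $\delta^2\geq 2\gamma^2$ then no protocol works even with unconstrained input. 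With that reading, the two bounds match exactly at the threshold $\delta^2=2\gamma^2$, and the equivalence follows.
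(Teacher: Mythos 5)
Your proposal is correct and matches the paper's argument exactly: the paper also obtains the ``only if'' direction from Theorem~\ref{thm:impos} and the ``if'' direction by letting $P\to\infty$ in the threshold~\eqref{eq:thresh} of Theorem~\ref{thm:possible:highSER}, noting the two bounds meet at $\delta^2=2\gamma^2$. Your additional care about the meaning of ``$P\to\infty$'' and the explicit identification of $P_{\min}=\gamma^2(\delta^2-\gamma^2)/(2\gamma^2-\delta^2)$ is consistent with the paper's discussion.
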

The proof of this result simply follows from Theorem~\ref{thm:impos} and by taking the limit $P\rightarrow \infty$ for the threshold in~\eqref{eq:thresh} (for fixed $\gamma^2,\delta^2$) from Theorem \ref{thm:possible:highSER}. 
An important take away from the achievability results is that the possible `finiteness' of commitment capacity (recall that we only present a lower bound when $P>0$ is finite) of Gaussian UNCs is owing to the underlying channel elasticity $E$. This fact is particularly stark when, for fixed $P>0,$ one allows the channel elasticity  to vanish, i.e., $E \to 0.$ In such a case, the commitment capacity lower bound (see Theorem~\ref{thm:possible:highSER}), and hence, the commitment capacity, becomes  infinite, which is exactly what is known for classical AWGN channels (where $\gamma^2=\delta^2$)  that exhibit $E=0.$ We capture this alternate perspective on the infinite commitment rate of classical AWGN channels in the following corollary.
\begin{corollary}
For a fixed $P>0,$ the commitment capacity of a Gaussian UNC$[\gamma^2,\delta^2]$ with channel elasticity $E\to0,$ i.e., any classic AWGN channel where $\gamma^2 \to \delta^2$, is infinite, irrespective of the fixed noise variance $\delta^2$. 
\end{corollary}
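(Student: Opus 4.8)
The plan is to derive this corollary directly from the lower bound $\mathbb{C}_L$ in Theorem~\ref{thm:possible:highSER} by examining its behavior as the elasticity $E = \delta^2 - \gamma^2 \to 0$ with $P > 0$ held fixed. First I would fix $P>0$ and $\delta^2>0$, and consider a sequence of channels Gaussian~UNC$[\gamma^2,\delta^2]$ with $\gamma^2 \uparrow \delta^2$, equivalently $E = \delta^2 - \gamma^2 \downarrow 0$. I need to check two things: (i) that for all $E$ sufficiently small the positive-rate condition~\eqref{eq:thresh} is satisfied, so that Theorem~\ref{thm:possible:highSER} actually applies; and (ii) that $\mathbb{C}_L \to \infty$ along this sequence.

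For step (i), rewrite~\eqref{eq:thresh} as $E = \delta^2 - \gamma^2 < \frac{P}{P+\gamma^2}\,\gamma^2$. As $\gamma^2 \to \delta^2 > 0$, the right-hand side converges to $\frac{P\delta^2}{P+\delta^2} > 0$, a strictly positive constant, while the left-hand side $E \to 0$. Hence there exists $E_0 > 0$ such that for all $0 < E < E_0$ the inequality holds and Theorem~\ref{thm:possible:highSER} guarantees achievability of the rate $\mathbb{C}_L$. For step (ii), look at
\begin{align}
\mathbb{C}_L = \frac{1}{2}\log\!\left(\frac{P}{\delta^2-\gamma^2}\right) - \frac{1}{2}\log\!\left(1+\frac{P}{\gamma^2}\right).
\end{align}
As $E = \delta^2 - \gamma^2 \to 0$ with $P$ and $\delta^2$ fixed, the first term $\tfrac12\log(P/E) \to +\infty$, while the second term $\tfrac12\log(1 + P/\gamma^2) \to \tfrac12\log(1 + P/\delta^2)$ is bounded. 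Therefore $\mathbb{C}_L \to \infty$. Since $\mathbb{C} \geq \mathbb{C}_L$ for every channel in the sequence and the capacity of the limiting classic AWGN channel (where $\gamma^2 = \delta^2$) is at least $\liminf$ of these lower bounds, we conclude the commitment capacity is infinite — recovering the Nascimento~\emph{et al.}~\cite{nascimento-barros-t-it2008} result from the elasticity viewpoint.

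One subtlety worth making explicit: the statement is really about the classic AWGN channel itself ($\gamma^2 = \delta^2$), not merely a limit of UNCs. The cleanest way to phrase the argument is that for the classic AWGN channel, any rate $R$ is achievable because we may \emph{view} it as a Gaussian~UNC$[\gamma^2,\delta^2]$ with $\gamma^2$ chosen just below $\delta^2$ — more precisely, an honest-party AWGN channel with noise variance $\delta^2$ is a special case of (and certainly no better for an adversary than) a Gaussian~UNC whose noise set $\cS = [\gamma^2,\delta^2]$ contains $\delta^2$; so any protocol secure for the UNC is secure for the fixed-variance AWGN channel. Given any target rate $R$, pick $\gamma^2 < \delta^2$ close enough to $\delta^2$ that $\mathbb{C}_L > R$ and~\eqref{eq:thresh} holds, and apply the UNC protocol. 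Since $R$ was arbitrary, $\mathbb{C} = \infty$.

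I do not anticipate a genuine obstacle here — the corollary is an immediate asymptotic reading of the closed-form bound~\eqref{eq:C:L}. The only point requiring a little care is the logical direction in the previous paragraph (arguing that UNC-achievability transfers \emph{down} to the fixed-variance channel, rather than naively taking a limit of capacities, which would require a continuity/monotonicity argument that is not obviously available). Once that reduction is stated, the divergence of $\tfrac12\log(P/(\delta^2-\gamma^2))$ does all the work.
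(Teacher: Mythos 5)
Your proposal is correct and follows essentially the same route as the paper: the corollary is obtained by letting $E=\delta^2-\gamma^2\to 0$ in the lower bound $\mathbb{C}_L$ of Theorem~\ref{thm:possible:highSER}, observing that $\tfrac12\log(P/E)$ diverges while $\tfrac12\log(1+P/\gamma^2)$ stays bounded (and that the positive-rate condition~\eqref{eq:thresh} holds for all sufficiently small $E$). Your additional remark --- that achievability must be transferred \emph{down} from the UNC with noise set $[\gamma^2,\delta^2]$ to the fixed-variance AWGN channel rather than obtained by a naive limit of capacities --- is a point of rigor the paper leaves implicit, but it does not change the argument's substance.
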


\section{Proofs}\label{sec:proofs}
\subsection{Converse: Proof of Theorem~\ref{thm:impos}}
A key fact we use in our proof and stated later in Claim \ref{clm:impossible:noiseless} is that commitment is impossible over noiseless channels. 
Hence, to argue to argue that commitment is impossible over \guncw s it suffices to simulate every use of the channel noiselessly. We do this in Lemma \ref{lem:reduction:noiseless}, the proof of which forms the principal portion of this section and is inspired by techniques used in \cite{damgaard1999possibility} for binary UNCs. The proof heavily relies on the conditions $\dsq \geq 2\gsq$ and $P \to \infty$ i.e., the channel is unconstrained. Throughout the section we assume that they hold.
Consider a scheme $\scrP$ that makes use of \guncw\gsqdsq\ channels to achieve commitment. We start by formalizing the notion of a ``use'' of a \guncw\gsqdsq\ by defining a two party interactive protocol \GUNC\gsqdsq\ in Algorithm 1 which $\scrP$ calls as a subroutine. Correspondingly we design a  \SGUNC\gsqdsq\ in Algorithm 2 that uses purely noiseless operations to simulate \GUNC\gsqdsq. The inputs to these protocols are the private views of Alice and Bob when the subroutine is invoked by $\scrP$. We denote them by $U_A$ and $U_B$ respectively.\footnote{not to be confused with the views in Definition \ref{def:commitmentprotocol}} 
We indicate by $x \in \bbR$ an input Alice is prescribed by $\scrP$ to send over the noisy channel. Although her view $U_A$ includes $x$, we explicitly denote $(x,U_A)$ as Alice's inputs to the protocol for readability purposes.
At the end of the protocol they both output prescribed values into their respective private views and forego any additional knowledge they may have gained through the execution of the subroutine. While an honest party sticks to the prescriptions, a cheating party may deviate whenever possible.\\
\begin{algorithm}\label{alg:gunc}
    \caption{\GUNC\gsqdsq$\left((x, U_A); U_B\right)$}

    \hspace*{\algorithmicindent} \textbf{Inputs:} Alice: $(x, U_A) = U_A $ \quad Bob: $ U_B $ 
    \begin{algorithmic}[1]
        \State The oracle decides $\theta^2 = \cO(\cdot)$ from \gsqdsq
        \State Alice sends $\tdx = x$ over the channel
        \State Bob receives $Y = \tdx + Z$ s.t. $Z \sim \cN(0,\tsq)$
        \State Alice, Bob output $x$, $Y$ respectively to their views
    \end{algorithmic}
\end{algorithm}\\
All \GUNC\footnote{We drop the suffix \gsqdsq\ from here on for reading convenience.} does is to execute a single run of the \guncw\ channel. The arbitrary parameter instantiation of the channel is modeled by an arbitrary fixed\footnote{It is important from the definition of \guncw\ that $\theta^2$ remain fixed for multiple runs of \GUNC.} oracle function $\cO$ who outputs noise parameter $\theta^2$ in \gsqdsq. \SGUNC\ attempts to replicate the AWGN $Z$ with two local random variables $Z_1$ and $Z_2$.\\
\begin{algorithm}\label{alg:sgunc}
    \caption{\SGUNC\gsqdsq$\left( (x,U_A); U_B\right)$ }
    \hspace*{\algorithmicindent} \textbf{Inputs:} Alice: $(x, U_A) = U_A$ \quad Bob: $ U_B$ 

    \begin{algorithmic}[1]
        \State Alice computes $W = \tdx + Z_1$ where $Z_1 \sim \cN(0,\gsq)$ and sends $W$ to Bob 
        \State Bob computes $Y = W + Z_2$ where $Z_2 \sim \cN(0,\gsq)$
        \State Alice, Bob output $x$, $Y$ respectively to their views.
    \end{algorithmic}
\end{algorithm}\\
To show that \SGUNC\  perfectly simulates \GUNC\ we compare the joint output distributions of Alice and Bob in both the protocols. Note that the output depends on whether the parties remain honest or employ a cheating strategy, which we model with arbitrary random functions at each step of the protocol. These are functions of all the information the respective party has at that instant. 
We seek to show that for all cheating attacks $\cA$ (by Alice) and $\cB$ (by Bob) on  \SGUNC, there exist corresponding cheating attacks $\tdcA$, $\tdcB$  on  \GUNC\ that result in the same output distributions.
It suffices to show that this holds for a fixed oracle $\cO$.
In doing so we argue that if there is a cheating attack on \SGUNC\ that precludes commitment (which we show is true in Claim \ref{clm:impossible:noiseless}), there is also a corresponding cheating strategy that can preclude commitment in \GUNC. 
On the other hand, any general scheme $\cP$ that uses \GUNC\ as a subroutine and is robust against cheating attacks, can alternatively use \SGUNC\ and assuredly remain secure. 

For completeness, we give a comparison of the output distributions of both the protocols over different cases.
\\

\noindent \textbf{(a) Alice and Bob are honest:} 
We set $\cO(\cdot) = 2 \gsq$ which is in \gsqdsq. This results in a joint output $(x;x+Z)$ for  \GUNC\ where $Z$ $\sim \cN (0,2\gamma^2)$. The joint output of  \SGUNC\ is $(x;x + Z_1 +Z_2)$ where $Z_1, Z_2 \sim \cN(0,\gamma^2)$. Since $Z_1$ and $Z_2$ are independent, both the output distributions match.

\noindent \textbf{(b) Alice cheats and Bob is honest:}
Let $\cA_i$ be the attack used by Alice in the $i$th step of \SGUNC. Instead of following the prescribed steps, she now sends $W = \cA_1(U_A)$ to Bob in step one, and outputs $\cA_3(W,U_A)$ in step three. Bob (being honest) outputs $Y = W + Z_2$. \SGUNC's output can therefore be written as
$\left(\cA_3\left( W , U_A\right) ; W + Z_2 \right)$ s.t.  $W = \cA_1(U_A)$, $Z_2 \sim \cN(0,\gsq).$ 

We show one of Alice's attacks that induces the same joint output in \GUNC. She sets $\theta^2$ according to $\tdcA_1(U_A) = \gamma$ in step one, chooses $\tdX$ using\footnote{Capital $\tdX$ used to account for random attacks.} $\tdcA_2(\theta^2, U_A) = \cA_1(U_A)$ to send over the noisy channel in step two and outputs according to $\tdcA_4(\tdX,\theta^2,U_A) = \cA_3(\tdX,U_A)$ in step four. Bob outputs $Y = \tdX + Z$. The resulting joint output is 
$\left( \tdcA_4(\tdX ,\theta^2 ,U_A) ; \tdX + Z  \right)$ s.t. $\tdX = \tdcA_2 (\theta^2,U_A)$, $\theta^2 = \tdcA_1(U_A)$.
This evaluates to
$\left( \cA_3(\tdX,U_A);  \tdX + Z \right)$ s.t. $\tdX = \cA_1(U_A)$, $\theta^2 = \gsq$, $Z \sim \cN(0,\tsq).$ Both the distributions match.

\noindent \textbf{(c) Alice is honest and Bob cheats:}
In a similar fashion let $\cB_i$ be the strategy used by Bob in the $i$th step of \SGUNC. An honest Alice computes $W = x + Z_1$. She hands over $W$ to Bob and outputs $x$. The protocol allows Bob to cheat only in the third step by outputting $\cB_3(W,U_B)$. The joint output is 
$\left( x; \cB_3(x + Z_1,U_B) \right)$ s.t. $Z_1 \sim \cN(0,\gsq).$

In the \GUNC\ protocol Alice sends $\tdx = x$ and outputs $x$ as prescribed. We design an strategy for Bob where he sets $\theta^2$ to $\tdcB_1(U_B) := \gamma$ in step one and outputs $\cB_3(Y, \theta^2, U_B) := \cB_3(Y,U_B)$ in the fourth step. Here $Y = x + Z$. The joint output is 
$\left( x; \tdcB_3(x + Z, \theta^2, U_B) \right)$
s.t. $\theta^2 = \tdcB_1(U_B)$.
This evaluates to
$\left( x; \cB_3(x + Z_1, U_B) \right)$
s.t. $\theta^2 = \gsq$, $Z \sim \cN(0,\theta^2)$.
Both the distributions match.
 It is worthwhile to note here that our choice of oracle function $\cO(\cdot) = 2\gsq$ for $\theta^2$ was only possible because $\dsq \geq 2\gsq$.
We can now make the following lemma\footnote{It is not clear how to extend our simulation to channels with a finite input power constraint. We believe it requires some non-trivial techniques and leave it as an open problem.}.
\begin{lemma}\label{lem:reduction:noiseless}
    Every use of a \guncw\gsqdsq\ with unconstrained inputs and $\dsq \geq 2 \gsq$ by a commitment scheme can be simulated by one-way noiseless transmissions.
\end{lemma}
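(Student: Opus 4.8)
The plan is to assemble the construction and the three-case comparison that precede the statement into a single reduction: Algorithm~2 (\SGUNC) is a faithful drop-in replacement for Algorithm~1 (\GUNC), and \SGUNC\ uses nothing but local randomness together with one noiseless one-way message. First I would record the structural point: in \SGUNC\ the only quantity that ever crosses between the parties is $W$, which Alice transmits to Bob over the noiseless authenticated link, while the draws $Z_1\sim\cN(0,\gsq)$ and $Z_2\sim\cN(0,\gsq)$ are purely local and are available to honest and dishonest parties alike under the private-randomization assumption. Hence, if in a scheme $\scrP$ we replace every invocation of \GUNC\ by an invocation of \SGUNC, we obtain a scheme that never uses the Gaussian UNC at all — each channel use has become one local step followed by one noiseless transmission from Alice to Bob.

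Next I would argue that this substitution alters neither the parties' joint views nor, therefore, the soundness, concealment and binding properties, which is precisely the content of cases (a)--(c) above. For honest Alice and honest Bob, instantiating the oracle at $\cO(\cdot)=2\gsq$ — admissible exactly because $\dsq\ge 2\gsq$ — makes $x+Z$ with $Z\sim\cN(0,2\gsq)$ equidistributed with $x+Z_1+Z_2$, where $Z_1,Z_2$ are independent $\cN(0,\gsq)$. For a cheating Alice against honest Bob, every step-wise attack $(\cA_1,\cA_3)$ on \SGUNC\ is reproduced on \GUNC\ by the attack with $\tdcA_1(U_A)=\gsq$, $\tdcA_2(\theta^2,U_A)=\cA_1(U_A)$, $\tdcA_4(\tdX,\theta^2,U_A)=\cA_3(\tdX,U_A)$ displayed above, and symmetrically a cheating Bob's attack $\cB_3$ is reproduced by $\tdcB_1(U_B)=\gsq$, $\tdcB_3=\cB_3$; conversely, every attack on \GUNC\ maps back to one on \SGUNC\ with the same output law. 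Since within one execution of $\scrP$ a party is either honest throughout or corrupt throughout, the matching \GUNC\ attack selects a single fixed noise variance over all invocations ($2\gsq$ if nobody cheats, $\gsq$ otherwise), so it is a legitimate oracle choice for a Gaussian UNC. Thus $\scrP$ and its \SGUNC-based simulation are indistinguishable to every adversary, and in particular any strategy defeating one defeats the other; this is the assertion of the lemma (and, combined with Claim~\ref{clm:impossible:noiseless}, it yields Theorem~\ref{thm:impos}).

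The step I expect to be the main obstacle is making ``faithful'' precise once the per-use correspondence is \emph{composed} with the surrounding interactive protocol: at its $i$-th step a corrupt party may apply an arbitrary randomized function of its entire current view, which accumulates across all earlier channel uses and noiseless messages, so one must verify — e.g.\ by induction over the uses of the subroutine — that the joint law of $(V_A,V_B,M)$ after each use is the same in the two worlds and that neither party's view is ever strictly more informative in one world than in the other. A related point needing care is where the hypotheses actually bite: $\dsq\ge 2\gsq$ is exactly what permits the honest simulation to realize the variance-$2\gsq$ Gaussian as a sum of two variance-$\gsq$ Gaussians, and $P\to\infty$ is what keeps a cheating Alice's surrogate input $\tdX=\cA_1(U_A)$ to the noisy channel always admissible — a cheating Alice in \SGUNC\ can push an arbitrarily large $W$ across the noiseless link, which a finite power budget on the noisy channel could not accommodate, so the finite-$P$ case is left open.
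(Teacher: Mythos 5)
Your proposal is correct and follows essentially the same route as the paper: the same \SGUNC\ construction with two local $\cN(0,\gsq)$ draws and one noiseless one-way message, the same oracle choice $\cO(\cdot)=2\gsq$ in the honest case (which is where $\dsq\ge 2\gsq$ enters), and the same step-wise attack correspondences in the two cheating cases, composed across all invocations with a single fixed noise variance. Your added remarks on composing the per-use equivalence through the interactive protocol and on why finite $P$ obstructs the simulation match the paper's own (footnoted) caveats.
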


All that now remains is to see if information-theoretically secure commitment is possible without using noise. We state this well known result without proof due to space constraints.
\begin{claim}\label{clm:impossible:noiseless}
	No $\e_1-$sound, $\e_2-$concealing, $\e_3-$binding $k-$bit commitment scheme is possible over noiseless interactions for 
	\begin{equation*}
		\e_2 < k(1-\e_1 -  2^k \e_3) - 2\sqrt{\e_1 + 2^k \e_3}.
	\end{equation*}
\end{claim}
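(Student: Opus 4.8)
This is the classical ``no information-theoretic commitment without noise'' statement, and the plan is the standard dichotomy: a commit phase carried out entirely over noiseless links either fails to hide $C$ or fails to bind Alice. Take honest Alice committing to a uniformly random $C\in[2^k]$ and honest Bob, let $V_B$ be Bob's view at the end of the commit phase (a deterministic function of $C$ and the two parties' private randomness, since every transmission is noiseless), and for each realization of $V_B$ define the \emph{openable set}
\[
S(V_B):=\bigl\{\,c\in[2^k]\ :\ T(c,d,V_B)=1\ \text{for some reveal message } d\,\bigr\}.
\]
I will show: (i) $\e_1$-soundness forces $C\in S(V_B)$ except with probability at most $\e_1$; (ii) $\e_3$-binding forces $S(V_B)$ to contain no value besides $C$, except with probability at most $2^k\e_3$; and (iii) whenever $S(V_B)=\{C\}$ the value $C$ is a function of $V_B$, so an honest Bob already learns almost all of $C$, which forces $\e_2$ above the stated threshold by a Fano-type estimate.

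For (i), honest Alice's honest reveal is accepted with probability at least $1-\e_1$, and acceptance of the honest reveal implies $C\in S(V_B)$. For (ii), fix an ordered pair of distinct values $(c',a)$ and consider the cheating Alice who ignores the real $C$, runs the honest commit protocol for the internal value $c'$, and in the reveal phase would exhibit an accepting message for $c'$ and one for $a$ whenever such messages exist. Under this strategy the commit-phase transcript has exactly the distribution of the honest execution conditioned on $\{C=c'\}$, and binding for the pair $(c',a)$ bounds the probability that both $c'$ and $a$ lie in $S(V_B)$ by $\e_3$; multiplying by $\bbP(C=c')=2^{-k}$ and summing over all $2^k(2^k-1)$ ordered pairs gives $\bbP\bigl(C\in S(V_B)\text{ and }|S(V_B)|\ge 2\bigr)\le(2^k-1)\e_3<2^k\e_3$. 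Writing $p$ for the probability of the bad event $F:=\{C\notin S(V_B)\}\cup\{|S(V_B)|\ge 2\}$, steps (i)--(ii) give $p<\e_1+2^k\e_3$.

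For (iii), on $F^c$ we have $S(V_B)=\{C\}$, so $C$ is determined by $V_B$ there; hence $H(C\mid V_B)\le \bbP(F)\,H(C\mid V_B,F)+h(\bbP(F))\le pk+2\sqrt p$, using $H(C\mid V_B,F)\le k$, $H(C\mid V_B,F^c)=0$, and the elementary bound $h(q)\le 2\sqrt q$ for the binary entropy $h$. Taking honest Bob as the receiver strategy in the $\e_2$-concealing definition, $\e_2\ge I(C;V_B)=k-H(C\mid V_B)\ge k(1-p)-2\sqrt p$; since $q\mapsto k(1-q)-2\sqrt q$ is decreasing and $p<\e_1+2^k\e_3$, this yields $\e_2> k(1-\e_1-2^k\e_3)-2\sqrt{\e_1+2^k\e_3}$, contradicting the hypothesis. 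Hence no such scheme exists.

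\textbf{Main obstacle.} The delicate point is step (ii): binding is a statement about a \emph{fixed} pair of values, while we must control the random event $|S(V_B)|\ge 2$. The trick that keeps the loss at the factor $2^k$ (rather than the $2^{2k}$ a naive union bound over pairs would give) is to first condition on $\{C=c'\}$ --- so that the commit transcript is exactly the commit-to-$c'$ transcript --- and only then union-bound over the second value $a$. A secondary technicality is that membership of a value in $S(V_B)$ is an \emph{existential} over reveal messages; this is legitimate because binding quantifies over all Alice strategies (she may search for an accepting reveal message), and to be fully formal one may fold Alice's reveal randomness into her view and assume without loss of generality that the accepting reveal message for a given value is canonical. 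The remaining entropy bookkeeping and the $h(q)\le 2\sqrt q$ estimate are routine.
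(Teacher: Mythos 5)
Your argument is correct and is essentially the paper's own proof: both define the event that Bob can pin down $C$ by exhaustively running the acceptance test over all $2^k$ values, bound its complement by $\e_1+2^k\e_3$ via soundness and binding, and conclude with the same Fano-type estimate $H(C\mid V_B)\le pk+2\sqrt{p}$ using $h(q)\le 2\sqrt{q}$. The only cosmetic difference is that the paper disposes of the existential over reveal messages up front, by noting that in the noiseless setting Bob's view already contains $\bx$ so the test may without loss of generality reject any $\tilde{\bx}\neq\bx$, whereas you keep the existential and handle it by folding Alice's search for an accepting reveal into her cheating strategy.
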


\removed{
    \begin{proof} Consider any general commitment protocol from Definition \ref{def:commitmentprotocol}, realised using a noiseless channel. Let's say we have Alice's commit string $c$, exchanged messages $m$, and codeword $\bx$ which Bob receives noiselessly as $\by=\bx$. This would mean that, at the end of the commit phase, Alice and Bob's views are $V_A = (c,\bx,m)$ and $V_B = (\bx,m)$, respectively. Let Alice reveal some $\tdc$, $\tdbx$, after which Bob performs a test $T(\tdc,\tdbx,V_B)$. Clearly, it is possible to formulate a test that fails for $\tdbx \neq \bx$, because Bob's View $V_B$ contains $\bx$. Let's therefore take such a test $T$. Now, consider an event $E$,
        \begin{align}
            E &= \{ T(c,\bx,V_B)=1\} \bigcap_{c'\neq c \in \cC} \{ T(c',\bx,V_B)=0\} \notag\\
            \Rightarrow \neg E &= \{ T(c,\bx,V_B)=0\} \bigcup_{c'\neq c \in \cC} \{T(c',\bx,V_B)=1 \}\notag\\
            \Rightarrow \bbP[\neg E] &= \bbP [T(c,\bx,V_B)=0]\notag\\
            &\quad + \sum_{c'\neq c \in \cC} \bbP \big[T(c',\bx,V_B)=1 | T(c,\bx,V_B)=1\big]\notag\\
            &\leq \e_1 + |\cC| \e_3\label{eq:impossibility:noiseless:probE}
        \end{align}
        The last step follows from the $\e_1-$soundness and $\e_3-$bindingness property of the protocol. Also observe that if the event $E$ were true, Bob could directly estimate Alice's string $c$ with certainty, by simply performing the test $T$ over all strings in $\cC$. Now,
        \begin{align}
            H(C|V_B) &\leq H(C,E|V_B)\notag\\
            &= H(E|V_B) + H(C|V_B,E)\notag\\ 
            &\leq H(E) + \bbP[E] \cdot H(C|V_B,E=\text{True})\notag\\
            &\quad + \bbP(\neg E)\cdot H(C|V_B,E=\text{False})\notag\\
            &\leq 2\sqrt{\e_1 + |\cC|\e_3} + 0 + ( \e_1 + |\cC| \e_3) \log |\cC|\notag
        \end{align}
        Now,
        \begin{align}
            \Rightarrow I(C;V_B) &= H(C) - H(C|V_B)\notag\\
            &\geq (1 - \e_1 - |\cC| \e_3) \log |\cC| - 2\sqrt{\e_1 + |\cC|\e_3}\notag\\
            \e_2 &\geq (1 - \e_1 - |\cC| \e_3) \log |\cC| - 2\sqrt{\e_1 + |\cC|\e_3}\notag
        \end{align}
        The reduction of $H(E)$ follows from its upperbound $H_2(\e_1 + |\cC|\e_3)$ from \eqref{eq:impossibility:noiseless:probE} and a general upperbound on binary entropy function $H_2(p)$ for general $p\in[0,1]$, $H_2(p) \leq 2 \ln 2 \sqrt{p(1-p)} \leq 2\sqrt{p}$. 
        \end{proof}
}

From Lemma \ref{lem:reduction:noiseless} and Claim \ref{clm:impossible:noiseless} we can conclude that for \guncw\gsqdsq\ with $\dsq\geq2\gsq$, there exists a finite $\epsilon = 2^{-3k}$ for which there is no scheme $\scrP$ that is \emph{$\epsilon$-sound, $\epsilon$-concealing and $\epsilon$-binding} i.e., \emph{achievable}.

It is easy to see that the reduction in Lemma \ref{lem:reduction:noiseless} can be extended for any multi party computation functionality as long as there are no security guarantees when both the sender and receiver are cheating. This gives us an impossibility result for a number of functionalities like oblivious transfer which  are known to be impossible over noiseless communication. 
\begin{corollary}\label{cor:smc}
    Consider a multi-party functionality that (i) gives no security guarantees for a pair of simultaneously cheating parties and (ii) is known to be impossible using noiseless interactions. The functionality is also impossible when the pair is allowed to use a \guncw\gsqdsq\ channel with $\dsq\geq2\gsq$ and has no input constraints.
\end{corollary}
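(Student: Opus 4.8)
The plan is to observe that Corollary~\ref{cor:smc} is essentially a direct abstraction of the argument already carried out in the proof of Theorem~\ref{thm:impos}. The proof of that theorem never used any property specific to commitment until the very last step (invoking Claim~\ref{clm:impossible:noiseless}); everything up to and including Lemma~\ref{lem:reduction:noiseless} was purely a statement about the channel, namely that a single use of a \guncw\gsqdsq\ with unconstrained inputs and $\dsq\geq 2\gsq$ can be \emph{perfectly} simulated by \SGUNC, which uses only one-way noiseless transmissions. So the first step is to re-examine the simulation argument (cases (a), (b), (c)) and check that it establishes indistinguishability of the joint output distributions of Alice and Bob \emph{for every pair of (honest-or-cheating) strategies}, independent of what larger protocol $\scrP$ is calling \GUNC\ as a subroutine — which it does, since the mapping between attacks $(\cA,\cB)$ on \SGUNC\ and attacks $(\tdcA,\tdcB)$ on \GUNC\ is constructed at the level of the subroutine's inputs $(U_A, U_B)$ and outputs alone.

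Second, I would run the standard reduction: suppose, for contradiction, that some functionality $\cF$ satisfying hypotheses (i) and (ii) \emph{is} realizable using a \guncw\gsqdsq\ channel (with $\dsq\geq 2\gsq$, no input constraints) by a protocol $\scrP$. Replace every invocation of the noisy channel in $\scrP$ by a call to \SGUNC, obtaining a protocol $\scrP'$ that uses only noiseless interactions. By the perfect simulation (Lemma~\ref{lem:reduction:noiseless} and its proof), for any adversarial strategy against $\scrP'$ there is a corresponding strategy against $\scrP$ producing the identical joint view distribution, and conversely; hence $\scrP'$ inherits exactly the security and correctness guarantees of $\scrP$, so $\scrP'$ realizes $\cF$ over noiseless interactions. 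The only subtlety here is ensuring the translation of strategies composes across the (possibly many, adaptively chosen) channel uses in $\scrP$; since the simulation is perfect and the oracle $\cO$ is required to output a fixed $\theta^2=2\gsq$ consistently (permitted precisely because $\dsq\geq 2\gsq$), this composes cleanly by a hybrid/induction argument over the channel uses. This contradicts hypothesis (ii), that $\cF$ is impossible over noiseless interactions.

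Third, I would note where hypothesis (i) enters: the simulation in cases (b) and (c) handles a single cheating party, and the honest-honest case (a) is where the particular choice $\cO(\cdot)=2\gsq$ is pinned down; but the framework says nothing about the simultaneously-cheating case, so the only functionalities for which this reduction yields an impossibility are those that already demand no guarantee when both parties cheat — exactly hypothesis (i). Oblivious transfer is the canonical example, being impossible over noiseless channels (and giving no guarantee against a colluding sender-receiver pair), so it falls under the corollary; commitment (Theorem~\ref{thm:impos}) is the special case already worked out.

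I expect the main obstacle — or rather, the only real content beyond bookkeeping — to be arguing the composition of the per-use simulation into a simulation of the \emph{entire} protocol while preserving the adversary's power exactly: one must be careful that a cheating party's strategy in a later channel use may depend on information gleaned in earlier uses, and that the strategy-translation map respects this causal dependence. This is handled by observing that the translation acts only on the subroutine interface $(U_A,U_B)\mapsto(\text{outputs})$, which already encodes all prior information, so a straightforward induction over the sequence of channel uses suffices; no new ideas beyond those in the proof of Theorem~\ref{thm:impos} are needed, which is why the corollary can be stated with only a sketch.
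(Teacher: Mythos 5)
Your proposal is correct and follows essentially the same route as the paper: the paper's own justification of Corollary~\ref{cor:smc} is precisely the observation that the simulation in Lemma~\ref{lem:reduction:noiseless} is agnostic to the functionality being realized (covering only the honest--honest and single-cheater cases, whence hypothesis (i)), so replacing each channel use by \SGUNC\ yields a noiseless protocol with identical guarantees, contradicting hypothesis (ii). Your added care about composing the per-use strategy translation across adaptively chosen channel uses via the views $(U_A,U_B)$ is a faithful elaboration of what the paper leaves implicit.
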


\subsection{Achievability: Proof of Theorem~\ref{thm:possible:highSER}}\label{sec:proofs:achieve:highSER}

\textbf{Overview:}
In the commit phase, Alice first generates a random bit string $U^m\in\{0,1\}^m$ towards committing a uniformly random string $C\in[2^{nR}].$
Alice then uses an error correcting code, say $\cC=(\psi,\phi)$ ($\cC\subseteq \mathbb{R}^n$ and is known to both parties), to encode this bit string $U^m$  to codeword $\bX=\psi(U^m)$ and transmits $\bX$ over the \gunc \  to Bob. Our  error correcting code $\cC$ (as in~\cite{shannonc,Gallager}) is a spherical code comprising \emph{equi-normed}  codewords (where all codewords reside on the surface of a $n$-dimensional Euclidean ball). Bob receives a noisy version $\bY$ of the transmitted codeword $\bX$.
We choose the rate $\bar{R}(\cC)$ of the error correcting code $\cC$ `sufficiently large' (see details in appendix); this ensures that  upon receiving a noisy observation $\bY$ of the transmitted codeword $\bX,$ Bob decodes a `large' list $\cL(\bY)\subseteq \cC$ of  codewords which are `typical' with respect to (w.r.t.) the observation $\bY$  (here typicality is w.r.t. the underlying \guncw\gsqdsq). 

Recall however that a cheating Alice can privately change the noise variance in the \gunc \ . Such an action can `enlarge' her set of `spoofing codewords' that she can present, if dishonest, in the reveal phase. To restrict Alice's potential dishonest behaviour, our protocol employs the classic hash-challenge approach (inspired by~\cite{damgaard1999possibility}). In particular, Bob initiates a \emph{two-round} hash challenge with Alice\footnote{We need two rounds of hash challenge to circumvent a non-trivial rate loss that arises in the single hash challenge due to the \emph{birthday paradox}; see~\cite{crepeau2020commitment, damgaard1999possibility} where it is discussed.} which essentially \emph{binds} Alice to her choice of $U^m$ (remember $U^m$ has a $1-1$ mapping with $\bX$ via the codebook $\cC$)   in the commit phase thereby ensuring Bob's test $T$ can detect any cheating attempt by Alice during the reveal phase. 
Essentially, the first hash challenge reduces the number of strings that Alice can use to confuse Bob in the reveal phase from exponential to polynomial in block-length $n$; the second hash challenge further brings down the number of such bit strings to one (this precludes the possibility of Bob being confused between two different bit string, say $U_1^m,U_2^m\in\{0,1\}^m,$ thereby ensuring the binding guarantee). 
Additionally, a strong randomness extractor is used by the committer Alice which extracts a secret key (the privacy amplification lemma~\cite{glh} allows us to quantify the size of this key) from $\bX$. This secret key is then XOR-ed with the commit string $C$ to realize a \emph{one-time pad} (OTP) scheme, which ensures concealment of the committed string against a malicious Bob in the commit phase.

Finally, in the reveal phase, Alice reveals the bit string $\tilde{u}^{m}$ to Bob over the noiseless link. Bob recovers $\td{\bx}=\phi(\tilde{u}^{m}),$ and then verifies it through a series of tests based on typicality, hash challenges, and the OTP-based randomness extractor.

\clearpage
\balance
\bibliographystyle{IEEEtran}
\bibliography{IEEEabrv,References,refs_AKY}

\begin{thebibliography}{10}
\providecommand{\url}[1]{#1}
\csname url@samestyle\endcsname
\providecommand{\newblock}{\relax}
\providecommand{\bibinfo}[2]{#2}
\providecommand{\BIBentrySTDinterwordspacing}{\spaceskip=0pt\relax}
\providecommand{\BIBentryALTinterwordstretchfactor}{4}
\providecommand{\BIBentryALTinterwordspacing}{\spaceskip=\fontdimen2\font plus
\BIBentryALTinterwordstretchfactor\fontdimen3\font minus
  \fontdimen4\font\relax}
\providecommand{\BIBforeignlanguage}[2]{{%
\expandafter\ifx\csname l@#1\endcsname\relax
\typeout{** WARNING: IEEEtran.bst: No hyphenation pattern has been}%
\typeout{** loaded for the language `#1'. Using the pattern for}%
\typeout{** the default language instead.}%
\else
\language=\csname l@#1\endcsname
\fi
#2}}
\providecommand{\BIBdecl}{\relax}
\BIBdecl

\bibitem{cramer2015secure}
R.~Cramer, I.~B. Damg{\aa}rd \emph{et~al.}, \emph{Secure multiparty
  computation}.\hskip 1em plus 0.5em minus 0.4em\relax Cambridge University
  Press, 2015.

\bibitem{wyner1}
A.~D. Wyner, ``The wire-tap channel,'' \emph{The Bell System Technical
  Journal}, vol.~54, no.~8, pp. 1355--1387, Oct. 1975, conference Name: The
  Bell System Technical Journal.

\bibitem{blum}
M.~Blum, ``Coin flipping by telephone a protocol for solving impossible
  problems,'' \emph{ACM SIGACT News}, vol.~15, no.~1, pp. 23--27, Jan. 1983.

\bibitem{crepeau_achieving_1988-1}
C.~Crepeau and J.~Kilian, ``Achieving oblivious transfer using weakened
  security assumptions,'' in \emph{[{Proceedings} 1988] 29th {Annual}
  {Symposium} on {Foundations} of {Computer} {Science}}, Oct. 1988, pp. 42--52.

\bibitem{winter2003commitment}
A.~Winter, A.~C.~A. Nascimento, and H.~Imai, ``Commitment capacity of discrete
  memoryless channels,'' in \emph{IMA International Conference on Cryptography
  and Coding}.\hskip 1em plus 0.5em minus 0.4em\relax Springer, 2003, pp.
  35--51.

\bibitem{mymb}
M.~Mamindlapally, A.~K. Yadav, M.~Mishra, and A.~J. Budkuley, ``Commitment
  capacity under cost constraints,'' in \emph{2021 IEEE International Symposium
  on Information Theory (ISIT)}.\hskip 1em plus 0.5em minus 0.4em\relax IEEE,
  2021, pp. 3208--3213.

\bibitem{nascimento-barros-t-it2008}
A.~C.~A. Nascimento, J.~Barros, S.~Skludarek, and H.~Imai, ``The {Commitment}
  {Capacity} of the {Gaussian} {Channel} {Is} {Infinite},'' \emph{IEEE
  Transactions on Information Theory}, vol.~54, no.~6, pp. 2785--2789, Jun.
  2008.

\bibitem{oggier2008practical}
F.~Oggier and K.~Morozov, ``A practical scheme for string commitment based on
  the gaussian channel,'' in \emph{2008 IEEE Information Theory
  Workshop}.\hskip 1em plus 0.5em minus 0.4em\relax IEEE, 2008, pp. 328--332.

\bibitem{canetti2001universally}
R.~Canetti, ``Universally composable security: A new paradigm for cryptographic
  protocols,'' in \emph{Proceedings 42nd IEEE Symposium on Foundations of
  Computer Science}.\hskip 1em plus 0.5em minus 0.4em\relax IEEE, 2001, pp.
  136--145.

\bibitem{damgaard1999possibility}
I.~Damg{\aa}rd, J.~Kilian, and L.~Salvail, ``On the (im) possibility of basing
  oblivious transfer and bit commitment on weakened security assumptions,'' in
  \emph{International Conference on the Theory and Applications of
  Cryptographic Techniques}.\hskip 1em plus 0.5em minus 0.4em\relax Springer,
  1999, pp. 56--73.

\bibitem{crepeau2020commitment}
C.~Cr{\'e}peau, R.~Dowsley, and A.~C.~A. Nascimento, ``On the commitment
  capacity of unfair noisy channels,'' \emph{IEEE Transactions on Information
  Theory}, vol.~66, no.~6, pp. 3745--3752, 2020.

\bibitem{wullschleger2009oblivious}
J.~Wullschleger, ``Oblivious transfer from weak noisy channels,'' in
  \emph{Theory of Cryptography Conference}.\hskip 1em plus 0.5em minus
  0.4em\relax Springer, 2009, pp. 332--349.

\bibitem{khurana2016secure}
D.~Khurana, H.~K. Maji, and A.~Sahai, ``Secure computation from elastic noisy
  channels,'' in \emph{Annual International Conference on the Theory and
  Applications of Cryptographic Techniques}.\hskip 1em plus 0.5em minus
  0.4em\relax Springer, 2016, pp. 184--212.

\bibitem{budkuley2021commitment}
A.~J. Budkuley, P.~Joshi, M.~Mamindlapally, and A.~K. Yadav, ``On the
  commitment capacity of reverse elastic channels,'' in \emph{2021 IEEE
  Information Theory Workshop (ITW)}.\hskip 1em plus 0.5em minus 0.4em\relax
  IEEE, 2021, pp. 1--6.

\bibitem{budkuley2022reverse}
------, ``On reverse elastic channels and the asymmetry of commitment capacity
  under channel elasticity,'' \emph{IEEE Journal on Selected Areas in
  Communications}, 2022.

\bibitem{yadav2021commitment}
A.~K. Yadav, M.~Mamindlapally, A.~J. Budkuley, and M.~Mishra, ``Commitment over
  compound binary symmetric channels,'' in \emph{2021 National Conference on
  Communications (NCC)}.\hskip 1em plus 0.5em minus 0.4em\relax IEEE, 2021, pp.
  1--6.

\bibitem{yadav2022commitment}
A.~K. Yadav, M.~Mamindlapally, P.~Joshi, and A.~J. Budkuley, ``On commitment
  over general compound channels,'' in \emph{2022 14th International Conference
  on COMmunication Systems \& NETworkS (COMSNETS)}.\hskip 1em plus 0.5em minus
  0.4em\relax IEEE, 2022, pp. 488--496.

\bibitem{budkuley2022commitment}
A.~J. Budkuley, P.~Joshi, M.~Mamindlapally, and A.~K. Yadav, ``Commitment over
  unreliable noisy channels: When awareness meets control,'' in \emph{2022 IEEE
  Information Theory Workshop (ITW)}.\hskip 1em plus 0.5em minus 0.4em\relax
  IEEE, 2022, pp. 732--737.

\bibitem{bloch}
M.~Bloch and J.~Barros, \emph{Physical-{Layer} {Security}: {From} {Information}
  {Theory} to {Security} {Engineering}}.\hskip 1em plus 0.5em minus 0.4em\relax
  Cambridge: Cambridge University Press, 2011.

\bibitem{shannonc}
C.~E. Shannon, ``Probability of error for optimal codes in a gaussian
  channel,'' \emph{The Bell System Technical Journal}, vol.~38, no.~3, pp.
  611--656, 1959.

\bibitem{Gallager}
R.~G. Gallager, \emph{Information Theory and Reliable Communication}.\hskip 1em
  plus 0.5em minus 0.4em\relax MIT Press, 1968.

\bibitem{glh}
Y.~Dodis, L.~Reyzin, and A.~Smith, ``Fuzzy extractors: How to generate strong
  keys from biometrics and other noisy data,'' in \emph{International
  conference on the theory and applications of cryptographic techniques}.\hskip
  1em plus 0.5em minus 0.4em\relax Springer, 2004, pp. 523--540.

\bibitem{damgard1998statistical}
I.~B. Damgard, T.~P. Pedersen, and B.~Pfitzmann, ``Statistical secrecy and
  multibit commitments,'' \emph{IEEE Transactions on Information Theory},
  vol.~44, no.~3, pp. 1143--1151, 1998.

\bibitem{impagliazzo1989pseudo}
R.~Impagliazzo, L.~A. Levin, and M.~Luby, ``Pseudo-random generation from
  one-way functions,'' in \emph{Proceedings of the twenty-first annual ACM
  symposium on Theory of computing}, 1989, pp. 12--24.

\bibitem{cover-thomas}
T.~Cover and J.~Thomas, \emph{Elements of Information Theory}.\hskip 1em plus
  0.5em minus 0.4em\relax Wiley, New York, 1991.

\bibitem{chain1}
A.~Vitanov, F.~Dupuis, M.~Tomamichel, and R.~Renner, ``Chain rules for smooth
  min- and max-entropies,'' \emph{IEEE Transactions on Information Theory},
  vol.~59, no.~5, pp. 2603--2612, 2013.

\bibitem{chain2}
R.~Renner and S.~Wolf, ``Simple and tight bounds for information reconciliation
  and privacy amplification,'' in \emph{Advances in Cryptology - ASIACRYPT
  2005}, B.~Roy, Ed.\hskip 1em plus 0.5em minus 0.4em\relax Berlin, Heidelberg:
  Springer Berlin Heidelberg, 2005, pp. 199--216.

\bibitem{rompel}
M.~Bellare and J.~Rompel, ``Randomness-efficient oblivious sampling,'' in
  \emph{Proceedings 35th Annual Symposium on Foundations of Computer
  Science}.\hskip 1em plus 0.5em minus 0.4em\relax IEEE, 1994, pp. 276--287.

\end{thebibliography}
\newpage
\appendix
\subsection{Our Achievability Protocol}\label{subsec:app:ach1} 
We now present our commitment protocol for  $P>P_{\min}$.
Alice and Bob fix a spherical error correcting code $\cC\subseteq \mathbb{R}^n$ (as in~\cite{shannonc,Gallager}) comprising an encoder $\psi:\{0,1\}^m\rightarrow \mathbb{R}^n$ and decoder $\phi:\mathbb{R}^n\rightarrow \{0,1\}^m\bigcup \{0\}$ with rate $\bar{R}=\frac{m}{n}:=\frac{1}{2}\log(\frac{1}{1-\left(1-\frac{\hat{d}}{2}\right)^2})-\td{\beta}$ such that $\|\bx\|=nP$, $\forall \bx \in \cC$ and $d_{\min}(\cC)=n\hat{d}^{2}P$ is the minimum distance of the code $\cC.$ 
The commitment rate of the protocol is 
\begin{align}\label{eq:commit:rate}
R=\frac{1}{2}\log\left(\frac{P}{E}\right) -\frac{1}{2}\log\left(1+\frac{P}{\gamma^2}  \right)-\beta_3.
\end{align}

Let $\mathcal{G}_1:=\{g_1:\{0,1\}^m \rightarrow \{0,1\}^{n(\bar{R}+\frac{1}{2}\log(\frac{E}{P})+ \beta_1)}\}$ be a $3n\bar{R}$-universal hash family, where 
$E:=\delta^2-\gamma^2$ and 
$\beta_1>0$ is a small enough constant. We choose sufficiently large $\bar{R}>\frac{1}{2}\log\left(\frac{P}{E} \right)$ so that $\cG_1$ is meaningfully defined.
Let  $\mathcal{G}_2:=\{g_2:\{0,1\}^m \rightarrow \{0,1\}^{n\beta_2}\}$ be a $2-$universal hash family, where $\beta_2>0$ is a small enough constant.
Let $\mathcal{E}:=\{\text{ext}:\{0,1\}^m \rightarrow \{0,1\}^{nR}\}$ be a $2-$universal hash family, where $\beta_3>0$ is chosen such that $\beta_3 > \beta_1 + \beta_2$.\footnote{Note that  $R$ can be made arbitrarily close to $\mathbb{C}_L.$}\\
Here are the commit and reveal phases of our protocol $\mathscr{P}$:\\

\noindent $\bullet$ \textbf{\underline{Commit Phase:}}  \\

\noindent Alice seeks to commit to string $C\in[2^{nR}]$ and proceeds as follows:
\\

\noindent (C1). Given $C$, Alice first generates $U^m=(U_1, U_2, \cdots, U_m)\sim \text{Bernoulli}(1/2)$ independent and identically distributed (i.i.d.) bits. 
\\
\noindent (C2). Using code $\cC=(\psi,\phi)$,  Alice picks the codeword $\bX=\psi(U^m)$ and sends it over the \gunc \. Let Bob receive $\bY$ over the noisy channel. 
\\
\noindent (C3). Bob creates a list $\cL(\by)$ of codewords in $\cC$ given by:\footnote{Here the parameter $\alpha_1>0$ is chosen appropriately small.}
\begin{IEEEeqnarray*}{rCl}
\cL(\by):=\{\bx\in \cC: n(\gamma^2 -\alpha_1) \leq ({\|\bx-\by\|_2})^2 \leq n(\delta^2 +\alpha_1) \}.
\end{IEEEeqnarray*}
\noindent (C4). Bob now initiates the two rounds of hash challenges for Alice. Bob first chooses the hash function $G_1\sim \text{Unif}\left(\mathcal{G}_1\right)$. Bob sends the description of $G_1$ to Alice over the two-way noiseless link.
\\
\noindent (C5).  Using $G_1,$ Alice computes the hash  $G_1(U^m)$ and sends the hash value, say $\bar{g}_1,$ to Bob over the noiseless link.
\\
\noindent (C6). Next, Bob initiates the second round of hash exchange by choosing another hash function $G_2\sim\text{Unif}\left(\mathcal{G}_2\right)$, and  sends the description of $G_2$ to Alice over the noiseless link.
\\
\noindent (C7). Once again, Alice locally computes the hash value $G_2(U^m)$ and sends the hash value, say $\bar{g}_2$, to Bob over the noiseless link.
\\
\noindent (C8). Alice now chooses an extractor function $\texttt{Ext}\sim\text{Unif}\left(\mathcal{E}\right)$ and sends\footnote{The operator $\oplus$ here denotes component-wise XOR.} the one-time pad (OTP) $Q = C \oplus \texttt{Ext}(U^m)$ along with the exact choice of the function $\texttt{Ext}$ to Bob over the noiseless link.\\

\noindent $\bullet$ \textbf{\underline{Reveal Phase:}} \\

\noindent The following operations comprise the reveal phase:
\\

\noindent (R1). Alice announces  $(\td{c},\td{u}^m)$ to Bob over the noiseless link.
\\
\noindent (R2). Bob determines the codeword $\td{\bx}=\td{\bx}(\td{u}^m)=\psi(\td{u}^m).$ 
\\
\noindent (R3). Bob accepts $\td{c}$ if all the following four conditions are simultaneously satisfied: 
\begin{enumerate}[(i)]
\item $\td{\bx}\in\cL(\by)$, where $\by$ is Bob's observation over the noisy channel at the end of the commit phase.
\item $g_1(\td{u}^m)=\bar{g}_1$, 
\item  $g_2(\td{u}^m)=\bar{g}_2$,
\item $\tilde{c}=q~\oplus \texttt{ext}(\td{u}^m)$. 
\end{enumerate}
Else, he rejects $\td{c}$ and outputs `0'.
\subsection{Positivity of rate $R$ of our protocol $\mathscr{P}$:}\label{subsec:app:ach2} 
We first show that the rate $R>0$ when  $P>P_{\min}$, i.e., $(\delta^2-\gamma^2)< \frac{P\gamma^2}{P+\gamma^2}.$     Toward proving rate positivity, let us assume that $(\delta^2-\gamma^2)= \frac{P\gamma^2}{P+\gamma^2} -\eta,$ for some $\eta>0.$ Recall that the rate of the commitment protocol is 
\begin{align}
R&\stackrel{}{=}\frac{1}{2}\log\left(\frac{P}{E}\right) -\frac{1}{2} \log\left(1+\frac{P}{\gamma^2}  \right)-\beta_3\\
&\stackrel{}{=}\frac{1}{2}\log\left(\frac{P}{\delta^2-\gamma^2}\right) -\frac{1}{2} \log\left(1+\frac{P}{\gamma^2}  \right)-\beta_3\\
&\stackrel{}{=}\frac{1}{2}\log\left(\frac{P}{\delta^2-\gamma^2}\right) -\frac{1}{2} \log\left(\frac{P+\gamma^2}{\gamma^2}  \right)-\beta_3\\
&\stackrel{}{=}\frac{1}{2}\log\left(\frac{\frac{P\gamma^2}{(P+\gamma^2)}}{\delta^2-\gamma^2}  \right)-\beta_3
\end{align}
Given $\eta>0,$ for $\beta_3=\beta_3(\eta)>0$ small enough, it follows that $R>0.$

\subsection{Security Analysis}\label{subsec:app:ach3}
\vspace{2mm}
\noindent\emph{\textbf{\underline{(i). $\e-$  soundness:}}}\\
    
Since Alice and Bob are honest, it follows directly that it is sufficient to show that  $\bbP\left(\bX\not\in \cL(\bY)\right)\leq \epsilon$ for $n$ large enough. This is because, conditioned on the event $\{\bX\in\cL(\bY)\},$  the rest of the three conditions are deterministically true when both parties are honest. The classic Chernoff bound gives us the necessary bound.\\

\noindent\emph{\textbf{\underline{(ii). $\epsilon-$concealing:}}}\\

Our approach uses the classic left-over hash lemma to show that the 2-universal hash function can be used as a strong randomness extractor to extract the `residual' randomness in the transmitted codeword $\bX$ and hence $U^{m}$ (recall that $\bX=\psi(U^m)$). It is well known that a positive rate commitment protocol is $\epsilon-$concealing for all $\epsilon>0$ for sufficiently large block length $n$, if it satisfies the \emph{capacity-based secrecy} notion (cf.~\cite[Def.~3.2]{damgard1998statistical}) and vice versa. We use a well established relation between \emph{capacity-based secrecy} and the \emph{bias-based secrecy} (cf.~\cite[Th.~4.1]{damgard1998statistical}) to prove that our protocol is $\epsilon$-concealing. 

  We first prove that our protocol satisfies bias-based secrecy by essentially proving the perfect secrecy of the key $\texttt{Ext}(U^m)$; we crucially use the  \emph{leftover hash} lemma. Several versions of this lemma exists (cf.~\cite{impagliazzo1989pseudo,glh} ); we use the following:
\begin{lemma}\label{lem:glh}
Let $\mathcal{G}=\{g:\{0,1\}^n\rightarrow \{0,1\}^l\}$ be a family of universal hash functions. Then, for any hash function $G$ chosen  uniformly at random from $\mathcal{G}$, and $W$
\begin{align*}
    \text{SD}(P_{G(W),G},P_{U_l,G}) \notag&\leq \frac{1}{2}\sqrt{2^{ -H_{\infty}(W)} 2{^l}}
\end{align*}
where $U_l\sim\text{Unif}\left(\{0,1\}^l\right).$
\end{lemma}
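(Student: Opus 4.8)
The statement is the (unconditional) \emph{leftover hash lemma}, so the plan is to prove it by the classical collision-probability argument, i.e.\ by controlling the R\'enyi-$2$ weight of the pair $(G(W),G)$. Write $P_W$ for the law of $W$ on $\{0,1\}^n$, set $Z:=(G(W),G)$, which takes values in the finite set $\mathcal{Z}:=\{0,1\}^l\times\mathcal{G}$ of cardinality $N:=2^l\,|\mathcal{G}|$, and observe that the target $P_{U_l,G}$ is exactly the uniform law on $\mathcal{Z}$ (the $G$-marginal of $Z$ is already uniform because $G\sim\text{Unif}(\mathcal{G})$). The only property of $\mathcal{G}$ I will use is the defining universality bound: for distinct $w_1\neq w_2$, the probability over the random choice of $G$ that $G(w_1)=G(w_2)$ is at most $2^{-l}$.

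First I would record the standard Cauchy--Schwarz inequality bounding statistical distance from uniform by the collision probability: for any distribution $P_Z$ on a set of size $N$,
\[
\text{SD}(P_Z,P_{U_l,G})=\tfrac12\sum_{z}\bigl|P_Z(z)-\tfrac1N\bigr|\;\le\;\tfrac12\sqrt{N\sum_z\bigl(P_Z(z)-\tfrac1N\bigr)^2}\;=\;\tfrac12\sqrt{N\cdot\mathrm{Coll}(Z)-1},
\]
where $\mathrm{Coll}(Z):=\sum_z P_Z(z)^2=\Pr[Z=Z']$ for $Z'$ an independent copy of $Z$; the last equality uses $\sum_z(P_Z(z)-\tfrac1N)^2=\mathrm{Coll}(Z)-\tfrac1N$. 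Next I would bound $\mathrm{Coll}(Z)$. Taking $(G,W)$ and $(G',W')$ i.i.d., we have $Z=Z'$ iff $G=G'$ and $G(W)=G(W')$; since $G$ is independent of $(W,W')$ and uniform, $\Pr[G=G']=1/|\mathcal{G}|$, while conditioning on $W\neq W'$ the universality of $\mathcal{G}$ gives $\Pr[G(W)=G(W')\mid W\neq W']\le 2^{-l}$. Hence
\[
\mathrm{Coll}(Z)=\frac{1}{|\mathcal{G}|}\Bigl(\Pr[W=W']+\Pr[W\neq W']\cdot\Pr[G(W)=G(W')\mid W\neq W']\Bigr)\le\frac{1}{|\mathcal{G}|}\Bigl(\sum_w P_W(w)^2+2^{-l}\Bigr),
\]
and $\sum_w P_W(w)^2\le\max_w P_W(w)=2^{-H_\infty(W)}$.

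Finally I would substitute this into the Cauchy--Schwarz bound with $N=2^l|\mathcal{G}|$, obtaining
\[
\text{SD}(P_{G(W),G},P_{U_l,G})\le\tfrac12\sqrt{2^l|\mathcal{G}|\cdot\tfrac{1}{|\mathcal{G}|}\bigl(2^{-H_\infty(W)}+2^{-l}\bigr)-1}=\tfrac12\sqrt{2^{l-H_\infty(W)}},
\]
which is the claimed inequality. There is essentially no obstacle here, since this is a textbook privacy-amplification bound; the only point requiring a little care is the decomposition of the collision probability of the \emph{pair} $(G(W),G)$ into $\Pr[G=G']$ times the conditional collision of the hash outputs, and invoking universality only on the off-diagonal event $W\neq W'$. (If a version with side information is needed elsewhere in the concealing analysis, one replaces $H_\infty(W)$ by a conditional min-entropy and runs the same argument conditionally; the statement above is the unconditional one.)
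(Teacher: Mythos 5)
Your proof is correct. The paper does not actually prove this lemma---it states it as the classical leftover hash lemma and cites \cite{impagliazzo1989pseudo,glh}---so there is no in-paper argument to compare against; your collision-probability/Cauchy--Schwarz derivation is exactly the standard proof of the cited bound, and every step checks out: the identity $\sum_z(P_Z(z)-1/N)^2=\mathrm{Coll}(Z)-1/N$, the factorization $\Pr[Z=Z']=\tfrac{1}{|\mathcal{G}|}\mathbb{E}_G\Pr[G(W)=G(W')]$, the split over $\{W=W'\}$ versus $\{W\neq W'\}$ with universality invoked only off the diagonal, and the final cancellation of the $2^{-l}$ term against the $-1$. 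Your closing parenthetical is also apt: in the concealing analysis the paper applies the lemma with $H_\infty(W)$ replaced by a conditional (smoothed, discretized) min-entropy given Bob's view, so the conditional variant you sketch is the form actually used downstream.
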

%
%
\removed{ 
\begin{itemize}
    \item $H_{\infty}(U^m|\bY,G_1(U^m),G_1,G_2(U^m),G_2)=\lim_{\Delta\rightarrow 0} H_{\infty}(U^m|\bY^{\Delta},G_1(U^m),G_1,G_2(U^m),G_2)$
    \item $H_{\infty}(U^m|\bY^{\Delta},G_1(U^m),G_1,G_2(U^m),G_2)=\lim_{\epsilon_1 \rightarrow 0} H_{\infty}^{\epsilon_1}(U^m|\bY^{\Delta},G_1(U^m),G_1,G_2(U^m),G_2) $
    \item Now lower bound the following quantity: $H_{\infty}^{\epsilon_1}(U^m|\bY^{\Delta},G_1(U^m),G_1,G_2(U^m),G_2)$
    \item now proceed with the lower bound and split to get $H_{\infty}^{\epsilon_1}(U^m|\bY^{\Delta}$ and $H_0$
    \item Use the fact that $H_{\infty}^{\epsilon_1}(A^n|B^n)\geq H(A|B)-n\delta$ for $n$ to conclude that  
\end{itemize}
}

We seek to lower bound $H_{\infty}(U^m).$ Toward this, we analyse the conditional min-entropy of $U^m$ conditioned on  $V_B$ after the hash challenge (this quantity lower bounds the min-entropy of interest). However, owing to the continuous alphabet of Bob's observation $\bY,$ we need to take a `discretization approach' to first ``quantize'' the channel output, say via $\bY^{\Delta},$ and then calculate the conditional min-entropy over  $\bY^{\Delta}$. This is important since min-entropy and conditional min-entropy (as well as their \textit{smooth} versions) do not posses the properties we seek under continuous variables. \\
Our treatment is inspired from~\cite{nascimento-barros-t-it2008,cover-thomas}. Let $Y$ be a continuous random variable in $\mathbb{R}$ and $\Delta>0$ be some constant. Then, from the mean value theorem, there exists a $y_k$ such that
\begin{align*}
    f_{Y}(y_k)= \frac{1}{\Delta}\int_{\Delta k}^{\Delta(k+1)}f_{Y}(y)dy
\end{align*}
Let $X\in\cX$. Then, the conditional distribution:
\begin{align*}
    f_{Y|X}(y_k|x)=\frac{1}{\Delta}\int_{\Delta k}^{\Delta(k+1)}f_{Y|X}(y|x)dy
\end{align*}
Let $Y^{\Delta}$ represent the quantized version of the continuous random variable $Y$, which takes value $y_k$ for every $Y \in [\Delta k, \Delta(k+1)]$, with probability $P_{Y^{\Delta}}(y_k)=f_{Y}(y_k)\Delta$. Then, 
\begin{align*}
    P_{XY^{\Delta}}(x,y_k)=P_{X}(x)P_{Y^{\Delta}|X}(y_k|x)=P_{X}(x)f_{Y|X}(y_k|x)\Delta
\end{align*}
The quantized version of the conditional min-entropy is:
\begin{align*}
    H_{\infty}(X|Y^{\Delta})&=\inf_{x,y_k}(-\log(P_{X|Y^{\Delta}}(x|y_k)))\notag \\
    &= \inf_{x,y_k}\log\Bigg(\frac{f_{Y}(y_k)\Delta}{P_{X}(x)f_{Y|X}(y_k|x)\Delta}\Bigg)
\end{align*}
%
For $U^{m}$, note that for quantization via $\Delta>0,$ we have
\begin{align*}
    H_{\infty}(U^m|&\bY,G_1(U^m),G_1,G_2(U^m),G_2)\\&=\lim_{\Delta\rightarrow 0} H_{\infty}(U^m|\bY^{\Delta},G_1(U^m),G_1,G_2(U^m),G_2)
\end{align*}
where $\bY^{\Delta}$ is discrete and a quantized version of $\bY$.

Furthermore, from the definition of smooth-min-entropy~\cite{chain1}, we know that 
\begin{align*}
    H_{\infty}(U^m|&\bY^{\Delta},G_1(U^m),G_1,G_2(U^m),G_2)\\&=\lim_{\epsilon_1 \rightarrow 0} H_{\infty}^{\epsilon_1}(U^m|\bY^{\Delta},G_1(U^m),G_1,G_2(U^m),G_2) 
\end{align*}
To proceed, we lower bound $H_{\infty}^{\epsilon_1}(U^m|\bY^{\Delta},G_1(U^m),G_1,G_2(U^m),G_2)$ for a given $\epsilon_1>0$ (we specify $\epsilon_1$ later). Crucially, our lower bound will not depend on the quantization parameter $\Delta;$ this allows us to extend the same lower bound to the limiting quantity: $\lim_{\Delta\rightarrow 0} \lim_{\epsilon_1\rightarrow 0}H_{\infty}^{\epsilon_1}(U^m|\bY^{\Delta},G_1(U^m),G_1,G_2(U^m),G_2).$

We first recap (without proof) a few well known results.\\

\begin{claim}[Min-entropy~\cite{chain1}]\label{claim:min:entropy}
 For any $\mu,\mu',\mu_1,\mu_2 \in[0,1)$ and any set of jointly distributed discrete random variables $(X, Y, W)$, we have 
\begin{IEEEeqnarray}{rCl}
&&H_{\infty}^{\mu+\mu^{'}}(X,Y|W)-H_{\infty}^{\mu^{'}}(Y|W)\notag\\ 
&&\geq H_{\infty}^{\mu}(X|Y,W)\label{eq:min:1}\\ 
&&\geq H_{\infty}^{\mu_1}(X,Y|W)-H_0^{\mu_2}(Y|W)-\log\left[\frac{1}{\mu-\mu_1-\mu_2}\right]\label{eq:min:2}
\end{IEEEeqnarray}
\end{claim}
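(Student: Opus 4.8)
\textbf{Proof plan for Claim~\ref{claim:min:entropy} (chain rules for smooth min-entropy).}

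The plan is to treat the two inequalities separately, since they have genuinely different flavours: \eqref{eq:min:1} is essentially a ``superadditivity/chain-type'' statement that holds at the level of the non-smooth quantities and then lifts to the smooth versions by a short optimization argument, whereas \eqref{eq:min:2} is the harder direction and requires exhibiting a near-optimal ``flattened'' distribution explicitly. Throughout I would work with the discrete joint distribution $P_{X,Y,W}$ and use the standard operational characterizations: $H_\infty^\mu(A|B) = \max_{Q:\,\mathrm{SD}(Q,P_{A,B})\le\mu}\, \big(-\log \max_{a,b} Q(a|b)\big)$, and $H_0^\mu(A|B)$ is the smooth max-entropy measuring (the log of) the smallest conditional support size achievable within statistical distance $\mu$.

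For the left inequality \eqref{eq:min:1}, first I would establish the non-smooth fact $H_\infty(X,Y|W) \le H_\infty(X|Y,W) + H_\infty(Y|W)$ — this is immediate from $P_{X,Y|W}(x,y|w) = P_{X|Y,W}(x|y,w)\,P_{Y|W}(y|w)$ and taking the worst case over $(x,y,w)$, bounding the product of maxima below by the max of the product. Rearranging gives $H_\infty(X|Y,W) \ge H_\infty(X,Y|W) - H_\infty(Y|W)$. To smooth it: take $Q_{XY|W}$ achieving $H_\infty^{\mu+\mu'}(X,Y|W)$ within distance $\mu+\mu'$, and let $Q'_{Y|W}$ be its $Y$-marginal (within distance $\mu+\mu'$ of $P_{Y|W}$, but we only need to compare appropriately). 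The subtlety is matching up the smoothing radii $\mu$ and $\mu'$ correctly; one chooses the perturbation of the joint distribution to simultaneously be a valid $\mu'$-perturbation witnessing $H_\infty^{\mu'}(Y|W)$ via its marginal, so that $-\log\max Q(x|y,w)$ is within the claimed bound and is a legitimate $\mu$-smooth witness for $X$ given $(Y,W)$. This is a routine triangle-inequality bookkeeping exercise on statistical distances.

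For the right inequality \eqref{eq:min:2}, this is where the real work is, and I expect it to be the main obstacle. The idea is: start from a near-optimal smooth state $Q_{XY|W}$ for $H_\infty^{\mu_1}(X,Y|W)$, and from a near-optimal small-support state witnessing $H_0^{\mu_2}(Y|W)$, so that outside a set $\cT_w$ of size at most $2^{H_0^{\mu_2}(Y|W)}$ the $Y$-mass is negligible. Restricting $Q$ to $(y\in\cT_w)$ loses at most $\mu_2$ in statistical distance and, crucially, cannot decrease $\max_x Q(x|y,w)$ by more than a factor controlled by the support size: for each fixed $(y,w)$ in the support, $\max_x Q(x|y,w) \le \frac{\max_{x} Q(x,y|w)}{Q(y|w)}$, and summing/renormalizing over the small set $\cT_w$ costs a factor $2^{H_0^{\mu_2}(Y|W)}$ and an additive $\log\frac{1}{\mu-\mu_1-\mu_2}$ slack coming from the renormalization of the truncated (sub-)distribution back to a probability distribution while staying within the total smoothing budget $\mu$. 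Assembling: $-\log\max_x Q(x|y,w) \ge H_\infty^{\mu_1}(X,Y|W) - H_0^{\mu_2}(Y|W) - \log[1/(\mu-\mu_1-\mu_2)]$, and the truncated-renormalized state is a valid $\mu$-smooth witness, giving \eqref{eq:min:2}. The delicate point is keeping the three error parameters $\mu,\mu_1,\mu_2$ consistent through the truncation-and-renormalization step and verifying that the $\log\frac{1}{\mu-\mu_1-\mu_2}$ term is exactly what the renormalization overhead produces; since the paper states this as a known result (citing~\cite{chain1}) I would, for the purposes of this paper, simply invoke it, but the sketch above is the route a self-contained proof would follow.
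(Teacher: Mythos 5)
The paper itself gives no proof of this claim: it is recapped ``without proof'' and attributed to~\cite{chain1}, so your closing decision to simply invoke the citation is exactly what the paper does. The added sketch, however, has a direction error in the part addressing \eqref{eq:min:1}. The non-smooth fact you state, $H_{\infty}(X,Y|W)\le H_{\infty}(X|Y,W)+H_{\infty}(Y|W)$, is false in general and is the reverse of what your own justification yields: since $\max_{x,y}P_{X|Y,W}(x|y,w)\cdot\max_{y}P_{Y|W}(y|w)\ \ge\ \max_{x,y}\left[P_{X|Y,W}(x|y,w)P_{Y|W}(y|w)\right]$ (product of maxima dominates maximum of the product), taking $-\log$ gives $H_{\infty}(X|Y,W)+H_{\infty}(Y|W)\le H_{\infty}(X,Y|W)$. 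That correct direction is precisely what \eqref{eq:min:1} asserts at $\mu=\mu'=0$, namely $H_{\infty}(X|Y,W)\le H_{\infty}(X,Y|W)-H_{\infty}(Y|W)$; your ``rearranged'' version $H_{\infty}(X|Y,W)\ge H_{\infty}(X,Y|W)-H_{\infty}(Y|W)$ points the wrong way. A concrete counterexample to your stated fact (with $W$ trivial): $P(0,0)=0.4$, $P(1,0)=0$, $P(0,1)=0.35$, $P(1,1)=0.25$ gives $H_{\infty}(X,Y)\approx 1.32$ while $H_{\infty}(X|Y)+H_{\infty}(Y)=0+(-\log_2 0.6)\approx 0.74$. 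The same confusion infects the smoothing step: to lower-bound $H_{\infty}^{\mu+\mu'}(X,Y|W)$ by $H_{\infty}^{\mu}(X|Y,W)+H_{\infty}^{\mu'}(Y|W)$ one should assemble a candidate joint perturbation from the two witnesses of the right-hand terms (e.g.\ $S(x,y,w):=Q(x|y,w)R(y,w)$) and verify its distance to $P$, rather than decompose the optimal witness of the left-hand term into marginals, which would naturally produce an upper bound on the left-hand side.

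Your outline for \eqref{eq:min:2} --- truncating the near-optimal witness for $H_{\infty}^{\mu_1}(X,Y|W)$ to the small $Y$-support set witnessing $H_{0}^{\mu_2}(Y|W)$, renormalizing, and charging the residual mass to the $\log\left[1/(\mu-\mu_1-\mu_2)\right]$ slack --- is the standard Renner--Wolf argument and is sound in outline. Since the paper offers nothing to compare against beyond the citation, the one actionable issue is the sign error above; it would invalidate a self-contained write-up of \eqref{eq:min:1} as drafted, though the repair is a one-line reversal of the inequality.
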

\begin{claim}[Max-entropy~\cite{chain1, chain2}]\label{claim:max:entropy}
For any $\mu,\mu',\mu_1,\mu_2 \in[0,1)$ and any set of jointly distributed random variables $(X, Y, W)$, we have 
\begin{IEEEeqnarray}{rCl}
&&H_{0}^{\mu+\mu^{'}}(X,Y|W)-H_{0}^{\mu^{'}}(Y|W) \notag\\
&&\leq H_{0}^{\mu}(X|Y,W)\label{eq:max:1}\\
&&\leq H_{0}^{\mu_1}(X,Y|W)-H_{\infty}^{\mu_2}(Y|W)+\log\left[\frac{1}{\mu-\mu_1-\mu_2}\right]\label{eq:max:2}
\end{IEEEeqnarray}
\end{claim}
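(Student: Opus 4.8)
\textbf{Proof plan for Claim~\ref{claim:max:entropy} (the max-entropy chain rules).}

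The plan is to mirror the standard proof technique for smooth min-entropy chain rules, adapted to the max-entropy ($H_0$) quantity, which counts (the log of) the size of the support of a subnormalized state or distribution. The two displayed inequalities~\eqref{eq:max:1} and~\eqref{eq:max:2} are in a sense dual to the min-entropy inequalities in Claim~\ref{claim:min:entropy}, and I would prove them by the same device: smoothing means we are allowed to replace the true joint distribution $P_{XYW}$ by any nearby distribution $P'_{XYW}$ (within statistical distance $\mu$ or similar), so the smooth quantities are optima over a ball, and the bounds follow from (i) an \emph{unsmoothed} algebraic inequality relating $H_0(X|Y,W)$, $H_0(XY|W)$, $H_0(Y|W)$, $H_\infty(Y|W)$, plus (ii) a careful bookkeeping of how the smoothing parameters add up when one composes the distribution-perturbations.

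First I would establish the \textbf{unsmoothed versions}. For~\eqref{eq:max:1}: fix $w$ and note that the support of $P_{XY|W=w}$ is contained in the product of the support of $P_{Y|W=w}$ with, for each $y$ in that support, the support of $P_{X|Y=y,W=w}$; taking logs and then the appropriate max/conditioning over $w$ gives $H_0(XY|W) \le H_0(Y|W) + H_0(X|Y,W)$, i.e. $H_0(XY|W) - H_0(Y|W) \le H_0(X|Y,W)$, which is the unsmoothed form of~\eqref{eq:max:1}. For the unsmoothed form of~\eqref{eq:max:2}, I would argue in the other direction: $H_0(X|Y,W) = \max_{y,w}\log|\mathrm{supp}\,P_{X|Y=y,W=w}|$, and one bounds $|\mathrm{supp}\,P_{X|Y=y,W=w}|$ by $|\mathrm{supp}\,P_{XY|W=w}|$ divided by a lower bound on $P_{Y|W=w}(y)$ restricted to the relevant $y$; the lower bound on that probability is exactly what $H_\infty(Y|W)$ controls (after a small-probability truncation that costs the $\log[1/(\mu-\mu_1-\mu_2)]$ slack). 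This yields $H_0(X|Y,W) \le H_0(XY|W) - H_\infty(Y|W) + (\text{slack})$.

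Then I would \textbf{insert the smoothing}. For the left inequality~\eqref{eq:max:1}: take near-optimal smoothed distributions achieving $H_0^{\mu+\mu'}(XY|W)$ and $H_0^{\mu'}(Y|W)$, combine them (the standard way: use the perturbed joint to define a perturbed conditional, absorbing the $\mu'$-ball for $Y$ and the remaining $\mu$-ball for the pair), and verify the total perturbation of $P_{X|Y,W}$ stays within $\mu$, so the unsmoothed inequality applied to the perturbed objects gives~\eqref{eq:max:1}. For the right inequality~\eqref{eq:max:2}: take the distribution optimal for $H_0^{\mu_1}(XY|W)$ and the one optimal for $H_\infty^{\mu_2}(Y|W)$, truncate the low-probability part of $Y$ (this is where the $\log[1/(\mu-\mu_1-\mu_2)]$ term is born — one throws away mass of order $\mu - \mu_1 - \mu_2$), and check that the composite perturbation of $P_{X|Y,W}$ is within $\mu$; then apply the unsmoothed bound. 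I expect to cite~\cite{chain1,chain2} for the precise statements and only sketch the composition of smoothing balls.

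\textbf{Main obstacle.} The routine part is the support-counting; the delicate part is the \emph{accounting of smoothing radii under composition} — making sure that when one perturbs $P_{XYW}$ and $P_{YW}$ separately and then forms the conditional $P_{X|YW}$, the induced perturbation of the conditional is genuinely within the claimed smoothing parameter, and that the truncation argument in~\eqref{eq:max:2} converts a probability deficit into exactly the additive $\log[1/(\mu-\mu_1-\mu_2)]$ term with the right constants. Since this is a standard (if technical) lemma in the smooth-entropy literature, I would present it as a recalled result with a short proof sketch along the lines above rather than a full from-scratch derivation, and refer the reader to~\cite{chain1,chain2} for complete details.
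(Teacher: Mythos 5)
The paper does not prove this claim at all: it is explicitly ``recapped (without proof)'' from~\cite{chain1,chain2}, so your plan of stating it as a recalled result with a sketch and deferring to those references matches what the paper does, and your sketch is the standard Renner--Wolf-style support-counting argument. One detail of your sketch is stated backwards and would fail if fleshed out literally: for~\eqref{eq:max:2} you do not bound $|\mathrm{supp}\,P_{X|Y=y,W=w}|$ by ``$|\mathrm{supp}\,P_{XY|W=w}|$ divided by a lower bound on $P_{Y|W=w}(y)$''; rather, $H_{\infty}^{\mu_2}(Y|W)$ supplies an \emph{upper} bound $P_{Y|W}(y|w)\leq 2^{-H_{\infty}^{\mu_2}(Y|W)}$ on every point probability, and the inequality follows from a Markov-type count: since $\sum_{y}|\mathrm{supp}\,P_{X|Y=y,W=w}|=|\mathrm{supp}\,P_{XY|W=w}|$, at most $(\mu-\mu_1-\mu_2)\,2^{H_{\infty}^{\mu_2}(Y|W)}$ values of $y$ can have conditional support larger than $2^{H_{0}^{\mu_1}(XY|W)-H_{\infty}^{\mu_2}(Y|W)}/(\mu-\mu_1-\mu_2)$, so that bad set carries probability at most $\mu-\mu_1-\mu_2$ and is removed by the extra smoothing, which is exactly where the $\log\left[\frac{1}{\mu-\mu_1-\mu_2}\right]$ term comes from.
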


We now state the following lemma:\\
\begin{lemma}\label{lem:smooth:min:entropy}
For any $\epsilon_1>0, \delta'>0$ and $n$ sufficiently large, 
\begin{align}
	H_{\infty}^{\epsilon_1}&(U^m|\bY^{\Delta},G_1(U^m),G_1,G_2(U^m),G_2)\notag
	\\&\stackrel{}{\geq} { n\left(\frac{1}{2}\log\left(\frac{P}{E}\right) -\frac{1}{2}\left( \log\left(1+\frac{P}{\gamma^2}  \right)\right)-\beta_1-\beta_2\right)}\notag \\& \hspace{35mm}-\log(\epsilon_1^{-1})-n\delta'\label{eq:h:inf:1}
    \end{align}
\end{lemma}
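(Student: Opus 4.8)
The plan is to lower bound the conditional smooth min-entropy of $U^m$ given the adversary's view after both hash challenges by ``peeling off'' the information revealed by the two hashes and the channel output one at a time, using the chain rules for smooth min- and max-entropy stated in Claims~\ref{claim:min:entropy} and~\ref{claim:max:entropy}. The starting point is the observation that $U^m$ is uniform on $\{0,1\}^m$, so $H_\infty(U^m) = m = n\bar R$. The two revealed hash values $G_1(U^m)$ and $G_2(U^m)$ live in alphabets of size $2^{n(\bar R + \frac12\log(E/P)+\beta_1)}$ and $2^{n\beta_2}$ respectively, and the channel output $\bY^\Delta$ (conditioned on the codeword) contributes at most the AWGN output entropy. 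The idea is that revealing a random variable supported on an alphabet of size $2^{n\rho}$ can decrease (smooth) min-entropy by at most roughly $n\rho$ (plus a $\log(\epsilon_1^{-1})$ smoothing penalty), via inequality~\eqref{eq:min:2} with the crude bound $H_0^{\mu_2}(\cdot|\cdot) \le n\rho$.

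Concretely, I would first use~\eqref{eq:min:2} to write
\begin{align}
H_\infty^{\epsilon_1}(U^m|\bY^\Delta, G_1(U^m),G_1,G_2(U^m),G_2) &\geq H_\infty^{\mu_1}(U^m, G_1(U^m),G_2(U^m)|\bY^\Delta,G_1,G_2) \notag\\
&\quad - H_0^{\mu_2}(G_1(U^m),G_2(U^m)|\bY^\Delta,G_1,G_2) - \log\!\left[\tfrac{1}{\epsilon_1-\mu_1-\mu_2}\right],\notag
\end{align}
bound the max-entropy term by $n(\bar R + \frac12\log(E/P)+\beta_1) + n\beta_2$ (the total description length of the two hashes), and note the remaining smooth min-entropy term equals $H_\infty^{\mu_1}(U^m|\bY^\Delta,G_1,G_2)$ since the hash values are deterministic functions of $U^m$ and the (independent, public) hash descriptions carry no information about $U^m$. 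Then I would lower bound $H_\infty^{\mu_1}(U^m|\bY^\Delta)$ by the discretized conditional entropy $H(U^m|\bY^\Delta)$ minus a slack $n\delta'/2$, using the standard asymptotic equipartition fact that $H_\infty^{\mu}(A^n|B^n) \geq H(A|B) - n\delta'$ for $n$ large (this is where the i.i.d.\ structure and the Chernoff/AEP argument enter, and where one must be careful that the bound is uniform in $\Delta$). Finally, as $\Delta \to 0$, $H(U^m|\bY^\Delta) \to H(U^m) - I(U^m;\bY) \to n\bar R - \frac{n}{2}\log(1+P/\gamma^2) + o(n)$, since the worst-case (smallest) mutual information over the channel's noise range is achieved at noise variance $\gamma^2$, giving capacity term $\frac12\log(1+P/\gamma^2)$; combining and recalling $\bar R = \frac12\log\frac{1}{1-(1-\hat d/2)^2} - \tilde\beta$ with the code-rate choice tying $\bar R$ to $\frac12\log(P/E)$ yields~\eqref{eq:h:inf:1}.

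The main obstacle I anticipate is making the discretization rigorous: one must show the lower bound on $H_\infty^{\epsilon_1}(U^m|\bY^\Delta,\dots)$ is independent of $\Delta$ so that it survives the limit $\Delta\to 0$ (and then $\epsilon_1\to 0$), which requires that the AEP-type slack $n\delta'$ and the smoothing parameters $\mu_1,\mu_2$ can be chosen uniformly in $\Delta$. The delicate point is that $H_0^{\mu_2}$ of the hash values is bounded purely by alphabet size and hence is manifestly $\Delta$-free, while $H_\infty^{\mu_1}(U^m|\bY^\Delta)$ must be controlled from below by a quantity converging to $H(U^m|\bY)$; here I would invoke the treatment of~\cite{nascimento-barros-t-it2008} for handling continuous channel outputs via quantization, together with the fact that $U^m$ itself is discrete so only the conditioning variable is being refined. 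A secondary technical care point is choosing $\mu_1,\mu_2$ (e.g.\ $\mu_1=\mu_2=\epsilon_1/4$) so that $\log[1/(\epsilon_1-\mu_1-\mu_2)] = \log(2\epsilon_1^{-1}) = \log(\epsilon_1^{-1}) + 1$, absorbing the additive constant into the $n\delta'$ term.
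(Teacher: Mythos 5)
Your proposal is correct and follows essentially the same route as the paper: apply the chain rule~\eqref{eq:min:2} to peel off the two hash values, bound their max-entropy by the alphabet sizes $n(\bar R+\tfrac12\log(E/P)+\beta_1)$ and $n\beta_2$, reduce the remaining term to $H_{\infty}(U^m|\bY^{\Delta},G_1,G_2)$ via determinism of the hashes, and then invoke the entropy-transfer result of Nascimento--Barros together with $I(U^m;\bY)\leq \tfrac{n}{2}\log(1+P/\gamma^2)$, with the bound being manifestly $\Delta$-free. The only cosmetic differences are that the paper sets $\mu_1=\mu_2=0$ so the smoothing penalty is exactly $\log(\epsilon_1^{-1})$, and in the final algebra $n\bar R$ simply cancels against the hash description length --- no substitution relating $\bar R$ to $\tfrac12\log(P/E)$ is needed.
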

\begin{proof}
\begin{align}
	&H_{\infty}^{\epsilon_1}(U^m|\bY^{\Delta},G_1(U^m),G_1,G_2(U^m),G_2)\notag\\
	&\stackrel{(a)}{\geq} H_{\infty}(U^m,G_1(U^m),G_2(U^m)|\bY^{\Delta},G_1,G_2)\notag\\
	&\hspace{10mm}-H_{0}(G_1(U^m),G_2(U^m)|\bY^{\Delta},G_1,G_2)-\log(\epsilon_1^{-1})\notag\\
	&\stackrel{(b)}{=} H_{\infty}(U^m|\bY^{\Delta},G_1,G_2)\notag\\
	&\hspace{5mm}+H_{\infty}(G_1(U^m),G_2(U^m)|\bY^{\Delta},G_1,G_2,U^m)\notag\\
	&\hspace{10mm}-H_{0}(G_1(U^m),G_2(U^m)|\bY^{\Delta},G_1,G_2)-\log(\epsilon_1^{-1})\notag\\
	&\stackrel{(c)}{=} H_{\infty}(U^m|\bY^{\Delta},G_1,G_2)\notag\\
	&\hspace{10mm}-H_{0}(G_1(U^m),G_2(U^m)|\bY^{\Delta},G_1,G_2)- \log(\epsilon_1^{-1})\notag\\
	&\stackrel{(d)}{=} H_{\infty}(U^m|\bY^{\Delta},G_1,G_2)\notag\\
	&\hspace{10mm}-H_{0}(G_1(U^m),G_2(U^m)|\bY^{\Delta},G_1,G_2)- \log(\epsilon_1^{-1})\notag\\
	&\stackrel{(e)}{\geq}  H_{\infty}(U^m|\bY^{\Delta},G_1,G_2)\notag\\
	&\hspace{10mm}-H_{0}(G_1(U^m)|G_2(U^m),\bY^{\Delta},G_1,G_2)\notag\\
	&\hspace{20mm}-H_{0}(G_2(U^m)|\bY^{\Delta},G_1,G_2)-\log(\epsilon_1^{-1})\notag\\
	&\stackrel{(f)}{\geq} H_{\infty}(U^m|\bY^{\Delta},G_1,G_2)\notag\\
	&\hspace{4mm}-n\left(\bar{R}+\frac{1}{2}\log\left(\frac{E}{P}\right) + \beta_1\right)-n\beta_2-\log(\epsilon_1^{-1}) \label{eq:minE:1}
\end{align}
Here,

\begin{enumerate}[(a)]
\item from the chain rule for smooth min-entropy; see Claim~\ref{claim:min:entropy} and substitute $\mu=\epsilon_1$, $\mu_1=0$ and $\mu_2=0$ in~\eqref{eq:min:2}.
\item from the chain rule for min-entropy; see Claim~\ref{claim:min:entropy} and substitute $\mu=0$ and $\mu'=0$ in~\eqref{eq:min:1}.
\item from the fact that $G_1(U^m)$ and $G_2(U^m)$  are deterministic functions of $G_1$, $G_2$ and $U^m$. The quantity \\  $H_{\infty}(G_1(U^m),G_2(U^m)|\bY^{\Delta},G_1,G_2,U^m)=0$ irrespective of $\bY^{\Delta}.$
\item by the Markov chain  $\bX \leftrightarrow \bY \leftrightarrow (G_1,G_2)$.
\item from the chain rule for max-entropy; see Claim~\ref{claim:max:entropy} and  substitute $\mu=0$ and $\mu'=0$ in~\eqref{eq:max:1}. 
\item by noting that the  range of $G_1$ is $ \{0,1\}^{n(\bar{R}+\frac{1}{2}\log(\frac{E}{P})+ \beta_1)}$ and range of $G_2$ is $\{0,1\}^{n\beta_2}.$
\end{enumerate}

We now lower bound the first term in~\eqref{eq:minE:1}, i.e., $H_{\infty}(U^m|\bY^{\Delta},G_1,G_2).$ Here is the  lemma with the lower bound.
\begin{lemma}\label{lem:min-entropy:mutual}
For any $\delta'>0$ small enough and $n$ sufficiently large, we have
\begin{align}\label{eq:min-entropy:mutual}
H_{\infty}(U^m|\bY^{\Delta},G_1,G_2)\geq H(U^m)-I(U^m;\bY)-n\delta'.
\end{align}
\end{lemma}
\begin{IEEEproof}
To prove this result, we first recap the following known result which relates conditional smooth-min-entropy and conditional (Shannon) entropy. We use the specific version in~\cite{nascimento-barros-t-it2008} (cf.~\cite[Thm.~1]{nascimento-barros-t-it2008}).
\begin{theorem}[~\cite{nascimento-barros-t-it2008}]\label{thm:entropy:transfer}
    Let $P_{V^n,W^n}$ be a distribution over finite alphabets $\mathcal{V}^n\times\mathcal{W}^n.$ Then, for any constants $\delta',\epsilon'>0$ and $n$ sufficiently large, we have
    \begin{align}
        H_{\infty}^{\epsilon'}(U^n|V^n)\geq H(U^n|V^n)-n \delta'.
    \end{align}
\end{theorem}
\noindent We now simplify $H_{\infty}(U^m|\bY^{\Delta},G_1,G_2)$ as follows:
\begin{align}
H_{\infty}&(U^m|\bY^{\Delta},G_1,G_2)\\
&\stackrel{(a)}{=}\lim_{\epsilon'\rightarrow 0} H^{\epsilon'}_{\infty}(U^m|\bY^{\Delta},G_1,G_2)\notag\\
&\stackrel{(b)}{\geq } \lim_{\epsilon'\rightarrow 0} H(U^m|\bY^{\Delta},G_1,G_2)-n\delta'\notag\\
&\stackrel{}{=} H(U^m|\bY^{\Delta},G_1,G_2)-n\delta'\notag\\
&\stackrel{(c)}{=} H(U^m)-I(U^m;\bY^{\Delta},G_1,G_2)-n\delta' \label{eq:mutual:1}
\end{align}
where
\begin{enumerate}[(a)]
    \item follows from the definition of smooth-min-entropy.
    \item follows from Theorem~\ref{thm:entropy:transfer}.
    \item follows from chain rule of mutual information.
\end{enumerate}
Let us now simplify $I(U^m;\bY^{\Delta},G_1,G_2)$ in~\eqref{eq:mutual:1} as  $\Delta\rightarrow 0$.  Note that
\begin{align}
    \lim_{\Delta\rightarrow 0} I(U^m;\bY^{\Delta},G_1,G_2)&\stackrel{(a)}{=} I(U^m;\bY,G_1,G_2)\notag\\
    &\stackrel{(b)}{=}I(U^m;\bY)+I(U^m;G_1,G_2|\bY)\notag\\
    &\stackrel{(c)}{=}I(U^m;\bY). \label{eq:mutual:2}
\end{align}
where
\begin{enumerate}[(a)]
\item follows from definition of $\bY^{\Delta}$ and the mutual information $I(U^m;\bY^{\Delta},G_1,G_2)$ and their limiting values ($\Delta\rightarrow 0$).
\item follows from the chain rule of mutual information
\item follows from the Markov chain $U^m\leftrightarrow\bX\leftrightarrow \bY\leftrightarrow (G_1,G_2).$
\end{enumerate}
Putting together~\eqref{eq:mutual:1} and~\eqref{eq:mutual:2}, we have~\eqref{eq:min-entropy:mutual}.
This completes the proof of Lemma~\ref{lem:min-entropy:mutual}.
\end{IEEEproof}
Coming back to the main proof of Lemma~\ref{lem:smooth:min:entropy}, let us now simplify~\eqref{eq:minE:1} as follows:
\begin{align}
	&H_{\infty}^{\epsilon_1}(U^m|\bY^{\Delta},G_1(U^m),G_1,G_2(U^m),G_2)\notag\\
	&\stackrel{(a)}{\geq} \left(H(U^m)-I(U^m;\bY)-n\delta'\right)\notag\\
	&\hspace{20mm}-n\left(\bar{R}+\frac{1}{2}\log\left(\frac{E}{P}\right) + \beta_1\right)\notag\\
	&\hspace{50mm}-n\beta_2-\log(\epsilon_1^{-1}) \notag\\
	&\stackrel{(b)}{\geq} H(U^m)-I(\bX;\bY)-n\left(\bar{R}+\frac{1}{2}\log\left(\frac{E}{P}\right) + \beta_1\right)\notag\\
	&\hspace{50mm}-n\beta_2-\log(\epsilon_1^{-1})-n\delta' \notag\\
	&\stackrel{(c)}{\geq} H(U^m)-n\mathbb{C}_{AWGN}(\gamma^2)-n\left(\bar{R}+\frac{1}{2}\log\left(\frac{E}{P}\right) + \beta_1\right)\notag\\
	&\hspace{40mm}-n\beta_2-\log(\epsilon_1^{-1})-n\delta' \notag\\
	&\stackrel{(d)}{=} n\bar{R}-n\left(\frac{1}{2}\log\left(1+\frac{P}{\gamma^2}\right)\right)\notag\\
	&\hspace{20mm}-n\left(\bar{R}+\frac{1}{2}\log\left(\frac{E}{P}\right) + \beta_1\right)\notag\\
	&\hspace{40mm}-n\beta_2-\log(\epsilon_1^{-1})-n\delta' \notag\\
	&\stackrel{}{=} n\left(\bar{R}-\frac{1}{2}\log\left(1+\frac{P}{\gamma^2}\right)\right)-n\left(\bar{R}+\frac{1}{2}\log\left(\frac{E}{P}\right) + \beta_1\right)\notag\\
	&\hspace{40mm}-n\beta_2-\log(\epsilon_1^{-1})-n\delta' \notag\\
	&\stackrel{(e)}{=} n\left(\frac{1}{2}\log\left(\frac{P}{E}\right)-\frac{1}{2}\log\left(1+\frac{P}{\gamma^2}\right)\right)-n\left(\beta_1+\beta_2\right)\notag\\
	&\hspace{60mm}-\log(\epsilon_1^{-1})-n\delta' \notag\\
\end{align}

\begin{enumerate}[(a)]
    \item follows from Lemma~\ref{lem:min-entropy:mutual}.
    \item follows from the Markov chain $U^m\leftrightarrow \bX\leftrightarrow \bY$ and the data processing inequality.
    \item follows from noting that $I(\bX;\bY)\leq n \mathbb{C}_{AWGN}(\gamma^2)$ where $\mathbb{C}_{AWGN}(\gamma^2):=\frac{1}{2}\log\left(1+\frac{P}{\gamma^2}\right)$ is the communication capacity of an AWGN  channel with noise variance $\gamma^2$ under  input power constraint $P$. Note that we need to allow the possibility that a cheating Bob may privately  fix an AWGN channel where the  variance may take any value in the range $[\gamma^2,\delta^2].$
    \item follows from noting that $H(U^m)=n\bar{R}$ and substituting for $\mathbb{C}_{AWGN}(\gamma^2).$
    \item follows from cancelling the term $n\bar{R}$ and rearranging the terms.
\end{enumerate} 
\end{proof}

Since the lower bound does not depend on $\Delta>0,$ the following lemma is straight forward. Note the change to the continuous random vector $\bY$ (instead of $\bY^{\Delta}$ as in the previous lemma) as part of Bob's view.
\begin{lemma}\label{lem:min:entropy:lb}
For any $\epsilon_1>0, \delta'>0$ and $n$ sufficiently large, 
\begin{align}
	H_{\infty}^{\epsilon_1}&(U^m|\bY,G_1(U^m),G_1,G_2(U^m),G_2)
	\notag\\
	&\hspace{5mm}\stackrel{}{\geq} { n\left(\frac{1}{2}\log\left(\frac{P}{E}\right) -\frac{1}{2}\log\left(1+\frac{P}{\gamma^2}\right)-\beta_1-\beta_2\right)}\notag\\
	&\hspace{50mm}-\log(\epsilon_1^{-1})-n\delta'\label{eq:h:inf:1}
    \end{align}
\end{lemma}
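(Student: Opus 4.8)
The statement to be proved, Lemma~\ref{lem:min:entropy:lb}, is essentially a restatement of Lemma~\ref{lem:smooth:min:entropy} with the discrete quantized observation $\bY^{\Delta}$ replaced by the continuous observation $\bY$ in the conditioning. The key point flagged just before the lemma --- and the reason the proof is ``straightforward'' --- is that the lower bound established in Lemma~\ref{lem:smooth:min:entropy} is a fixed quantity that does \emph{not} depend on the quantization parameter $\Delta>0$. So the plan is simply to pass to the limit $\Delta \to 0$ on both sides, after first recalling the limiting identity that was stated in the concealing analysis, namely
\begin{align*}
H_{\infty}(U^m|\bY,G_1(U^m),G_1,G_2(U^m),G_2) = \lim_{\Delta\rightarrow 0} H_{\infty}(U^m|\bY^{\Delta},G_1(U^m),G_1,G_2(U^m),G_2),
\end{align*}
together with the analogous identity $H_{\infty}(U^m|\bY^{\Delta},\cdots) = \lim_{\epsilon_1\to 0} H_{\infty}^{\epsilon_1}(U^m|\bY^{\Delta},\cdots)$ from the definition of smooth min-entropy.

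\textbf{Steps in order.} First I would fix arbitrary $\epsilon_1>0$ and $\delta'>0$ and invoke Lemma~\ref{lem:smooth:min:entropy} to get, for all $n$ sufficiently large and for every $\Delta>0$, the inequality
\begin{align*}
H_{\infty}^{\epsilon_1}(U^m|\bY^{\Delta},G_1(U^m),G_1,G_2(U^m),G_2) \geq n\left(\tfrac{1}{2}\log\tfrac{P}{E} - \tfrac{1}{2}\log\left(1+\tfrac{P}{\gamma^2}\right)-\beta_1-\beta_2\right) - \log(\epsilon_1^{-1}) - n\delta',
\end{align*}
where, crucially, the right-hand side is independent of $\Delta$. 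Second, since the inequality holds uniformly in $\Delta$, I would take $\Delta\to 0$: the right-hand side is unchanged, and the left-hand side converges to $H_{\infty}^{\epsilon_1}(U^m|\bY,G_1(U^m),G_1,G_2(U^m),G_2)$, which one interprets through the discretization framework set up earlier (Bob's continuous observation $\bY$ is the monotone limit of its quantizations $\bY^{\Delta}$, and min-entropy conditioned on a continuous variable is defined as this limit, following~\cite{nascimento-barros-t-it2008,cover-thomas}). This yields exactly~\eqref{eq:h:inf:1} with $\bY$ in place of $\bY^{\Delta}$, completing the proof.

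\textbf{Main obstacle.} The only genuinely delicate point is justifying that the limiting operation commutes correctly --- i.e., that $H_{\infty}^{\epsilon_1}(U^m|\bY,\cdots)$ is well-defined as $\lim_{\Delta\to 0} H_{\infty}^{\epsilon_1}(U^m|\bY^{\Delta},\cdots)$ and that taking the limit preserves the (non-strict) inequality. This is handled by the discretization machinery already introduced in the concealing analysis: the sequence of conditional (smooth) min-entropies is monotone in the refinement of the quantization, and since each term is bounded below by the same $\Delta$-free constant, the limit is too. Because no new estimate is required --- we are just removing the scaffolding quantization --- the argument is short. I would present it as: apply Lemma~\ref{lem:smooth:min:entropy}, note $\Delta$-independence of the bound, pass to the limit using the definitions recalled above, and conclude.

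\begin{proof}
Fix any $\epsilon_1>0$ and $\delta'>0$. By Lemma~\ref{lem:smooth:min:entropy}, for all $n$ sufficiently large and for every quantization parameter $\Delta>0$,
\begin{align*}
H_{\infty}^{\epsilon_1}(U^m|\bY^{\Delta},G_1(U^m),G_1,G_2(U^m),G_2) &\geq n\left(\frac{1}{2}\log\left(\frac{P}{E}\right) -\frac{1}{2}\log\left(1+\frac{P}{\gamma^2}\right)-\beta_1-\beta_2\right)\\
&\hspace{20mm} -\log(\epsilon_1^{-1})-n\delta'.
\end{align*}
The right-hand side above does not depend on $\Delta$. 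Recall from the discretization framework that Bob's continuous observation $\bY$ is obtained as the limit of its quantizations $\bY^{\Delta}$ as $\Delta\to 0$, and accordingly
\begin{align*}
H_{\infty}^{\epsilon_1}(U^m|\bY,G_1(U^m),G_1,G_2(U^m),G_2) = \lim_{\Delta\rightarrow 0} H_{\infty}^{\epsilon_1}(U^m|\bY^{\Delta},G_1(U^m),G_1,G_2(U^m),G_2).
\end{align*}
Since the displayed lower bound holds for every $\Delta>0$ with a $\Delta$-independent right-hand side, taking the limit $\Delta\to 0$ preserves the inequality and gives
\begin{align*}
H_{\infty}^{\epsilon_1}(U^m|\bY,G_1(U^m),G_1,G_2(U^m),G_2) &\geq n\left(\frac{1}{2}\log\left(\frac{P}{E}\right) -\frac{1}{2}\log\left(1+\frac{P}{\gamma^2}\right)-\beta_1-\beta_2\right)\\
&\hspace{20mm} -\log(\epsilon_1^{-1})-n\delta',
\end{align*}
which is~\eqref{eq:h:inf:1}. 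This completes the proof.
\end{proof}
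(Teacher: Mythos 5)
Your proposal is correct and follows exactly the paper's route: the paper disposes of this lemma with the single observation that the lower bound from Lemma~\ref{lem:smooth:min:entropy} is independent of the quantization parameter $\Delta$, so passing to the limit $\Delta\to 0$ (under the discretization definition of conditional min-entropy for continuous $\bY$ set up in the concealing analysis) yields the stated bound. Your write-up merely makes explicit the limiting identity and the preservation of the inequality, which is the same argument the paper treats as ``straightforward.''
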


Next, we use Lemma~\ref{lem:glh} to show that the distribution of the secret key $\text{Ext}(\bX)$ is statistically close to  a uniform distribution thereby achieving bias-based secrecy. Let us fix $\epsilon_1:=2^{-n\alpha_2}$, where $\alpha_2>0$ is an arbitrary small constant.
We make the following correspondence in  Lemma~\ref{lem:glh}: $G\leftrightarrow \text{Ext}$, $W\leftrightarrow U^m$ and  $l\leftrightarrow nR$
to get the following:
\begin{align}
    \text{SD}(&P_{\text{Ext}(U^m),\text{Ext}},P_{U_l,\text{Ext}}) \notag \\
    &\stackrel{(a)}{\leq} \frac{1}{2}\sqrt{2^{ -H_{\infty}(U^m)} 2{^{nR}}}\notag\\
		&\stackrel{(b)}{\leq} \frac{1}{2}\sqrt{2^{ -H_{\infty}(U^m|\bY^{\Delta},G_1(U^m),G_1,G_2(U^m),G_2)} 2^{nR}}\notag\\
    &\stackrel{(c)}{\leq}\frac{1}{2}
	\sqrt{2^{-n\left(\frac{1}{2}\log\left(\frac{P}{E}\right) -\frac{1}{2}\left( \log\left(1+\frac{P}{\gamma^2}  \right)\right)-\beta_1-\beta_2-\alpha_2-\delta'\right)}} \notag\\
	&\hspace{20mm} \cdot \sqrt{2^{n\left(\frac{1}{2}\log\left(\frac{P}{E}\right) -\frac{1}{2}\left( \log\left(1+\frac{P}{\gamma^2}  \right)\right)   -\beta_3\right)}}\notag\\
    &= \frac{1}{2}\sqrt{2^{n(\beta_1+\beta_2+\alpha_2+\delta'-\beta_3)}}\notag\\
    &\stackrel{(d)}{\leq} 2^{-n\alpha_3}  \label{eq:sd:1}
\end{align}
where, $n$ is sufficiently large so that $\delta'>0$ is negligibly small such that $\alpha_3>0.$
Here,\begin{enumerate}[(a)]
\item follows directly from Lemma~\ref{lem:glh}.
\item follows as conditional min-entropy (under any $\Delta>0$ sufficiently small) lower bounds min-entropy. This also holds under the limit $\Delta\rightarrow 0.$
\item follows from the definition of $R$ (cf.~\eqref{eq:commit:rate}) and Lemma~\ref{lem:min:entropy:lb} 
\item follows from noting that $\beta_3$ is chosen such that $\delta'+\beta_1+\beta_2+\alpha_2-\beta_3<0$; here, we note that $\alpha_2$ is an arbitrarily chosen (small enough) constant, and $\delta'>0$  can be made arbitrarily small for $n$ sufficiently large. As such, a choice of $\beta_3>\beta_1 +\beta_2$ is sufficient.
\end{enumerate}
From~\eqref{eq:sd:1} and Lemma~\ref{lem:glh}, it follows that we can extract $n\left(\frac{1}{2}\log\left(\frac{P}{E}\right) -\frac{1}{2}\left( \log\left(1+\frac{P}{\gamma^2}  \right)\right)-\beta_3\right)$ almost uniformly random bits which proves the security of the secret key; this guarantees that our commitment protocol satisfies bias-based secrecy (cf.~\cite[Def.~3.1]{damgard1998statistical}).  

To conclude the concealment analysis, recall from our discussion earlier (see also~\cite[Th.~4.1]{damgard1998statistical}) that bias-based secrecy under \emph{exponentially decaying} statistical distance, as in~\eqref{eq:sd:1}, implies capacity-based secrecy. Since we have already shown that the protocol satisfies bias-based secrecy with exponentially decaying security parameter, hence, the protocol satisfies capacity-based secrecy. In particular, for $n$ sufficiently large,  $I(C;\view_B)\leq \epsilon$ and our protocol is $\epsilon$-concealing. \\

\noindent \emph{\textbf{\underline{(iii). $\epsilon-$ binding:}}}\\

%
%
\removed{
\begin{figure}[!ht]
        \centering
        \includegraphics[scale=0.5]{bind2.jpg}
        \caption{Alice's cheating strategy:  set REC$[\gamma,\delta]$ to BSC($s$), $s\in[\gamma,\delta]$ and send $\bx$ in the commit phase, and then reveal  $\bx'\neq \bx$ in the reveal phase. It can be shown that the `best' such choice for Alice is to fix $s=\gamma$.}
        \label{fig:my_label}
\end{figure}
}
To analyse binding, we analyse the scenario where a potentially dishonest Alice seeks to confuse Bob between two (or more) different commit bit strings in $\{0,1\}^m,$  say $\bar{u}^m$ and $\td{u}^m$ (i.e., Bob's test accepts two different commit strings). We seek to show that w.h.p our commitment protocol precludes any such possibility.

To begin, a cheating Alice seeks to maximize the set of potential bit strings in $\{0,1\}^m$ that would appear potential candidates in the list $\cL(\by)$ generated by Bob. Toward the same, a cheating Alice employs the following strategy: she first picks up a vector $\bx\in \cS(0,\sqrt{n(P-\gamma^2)})$ in the commit phase. Next, if actively dishonest she may privately fix the variance of the  \gunc \ to any value $s^2\in[\gamma^2,\delta^2].$ It will be apparent later that (cf. Claim~\ref{claim:A:set}) that the `worst' such choice would be the lowest value possible, i.e., $s^2=\gamma^2.$  Let us define $E_s:=\delta^2-s^2$. Note that $E=E_{\gamma}=\delta^2-\gamma^2$.
Let $\bX=\bx$ be the transmitted vector and $\bY=\by$ be the  bit string received by Bob's over the AWGN($s^2$). Note that a cheating Alice need not transmit a codeword, however $\bx\in\cS(P),$ i.e., the transmitted vector needs to satisfy the transmit power constraint $P.$ Alice can cheat successfully by confusing Bob in the reveal phase only if she can find two \emph{distinct} length-$m$ binary strings, say $\bar{u}^m$ and $\td{u}^m$ such that (i) if $\psi(\bar{u}^m)=\bar{\bx}$ and $\psi(\td{u}^m)=\td{\bx}$ then  $\bx',\td{\bx} \in \cL(\by)$, and (ii) $\bar{u}^m$ and $\td{u}^m$ pass the two rounds of sequential random hash exchange challenge (w.r.t hash functions $G_1(\cdot)$ and $G_2(\cdot)$).
Let $\cA$ denote all codewords in $\cC$ corresponding to such  length-$m$  bit strings. Then, the following claim shows that $\cA$ can be exponentially large.\\

\begin{claim}\label{claim:A:set}
Given any $\eta>0$, for $n$ sufficiently large,
\begin{equation}
|\cA|\leq 2^{n(\bar{R}+\frac{1}{2}\log(\frac{E}{P})+\eta)}.
\end{equation}
\end{claim}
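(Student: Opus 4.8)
The plan is to produce a set containing $\cA$ whose cardinality can be controlled by the packing/volume property of the spherical code $\cC$ (cf.~\cite{shannonc,Gallager}). The geometric heart of the matter is that, with high probability over the channel noise, the whole decoding list $\cL(\by)$ sits inside a Euclidean ball of squared radius essentially $nE_s$ centred at Alice's transmitted vector $\bx$, where $E_s:=\delta^2-s^2$ is the effective elasticity for the noise variance $s^2$ she privately selects. Since $E_s\le E$ with equality iff $s^2=\gamma^2$, this is exactly why $s^2=\gamma^2$ is her worst choice; it is also why Alice is assumed to transmit with $\bx\in\cS(0,\sqrt{n(P-\gamma^2)})$, since keeping $\bY$ near the codeword sphere of radius $\sqrt{nP}$ maximises the list.

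Concretely, fix $\bx$ with $\|\bx\|_2^2\le n(P-\gamma^2)$ and $s^2\in[\gamma^2,\delta^2]$, and write $\bY=\bx+\bZ$ with $\bZ\sim\cN(\mathbf 0,s^2 I_n)$. Using the identity
\begin{equation*}
\|\bx'-\bx\|_2^2 \;=\; \|\bx'-\bY\|_2^2 \;-\; \|\bZ\|_2^2 \;+\; 2\langle\bx'-\bx,\bZ\rangle ,
\end{equation*}
the $\chi^2$-concentration $\|\bZ\|_2^2\ge n(s^2-\alpha_2)$ (w.p. $1-2^{-\Omega(n)}$), and a Gaussian tail bound on the cross term (controlled by a slack $\alpha_3>0$), one shows that any codeword passing Bob's list test $\|\bx'-\bY\|_2^2\le n(\delta^2+\alpha_1)$ necessarily obeys $\|\bx'-\bx\|_2^2\le n(E_s+\alpha_1+\alpha_2+\alpha_3)$; hence $\cL(\by)$, and therefore $\cA\subseteq\cL(\by)$, lies in that ball. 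Invoking the volume bound for the spherical code $\cC$ (every codeword lies on the sphere of radius $\sqrt{nP}$, so the number of them inside a ball of squared radius $n\rho$ is at most $2^{n\bar R}(\rho/P)^{n/2}$) yields
\begin{equation*}
|\cA| \;\le\; |\cL(\by)| \;\le\; 2^{n\bar R}\left(\frac{E_s+\alpha_1+\alpha_2+\alpha_3}{P}\right)^{n/2}.
\end{equation*}
Finally, $E_s=\delta^2-s^2\le\delta^2-\gamma^2=E$, and $\alpha_1,\alpha_2,\alpha_3$ may be taken as small as we please, so for any $\eta>0$ and all $n$ large enough $|\cA|\le 2^{\,n(\bar R+\frac12\log(E/P)+\eta)}$.

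The step I expect to be delicate is the containment of $\cL(\by)$ in a ball of squared radius \emph{essentially} $nE_s$, rather than a constant factor larger. A crude union bound over all $2^{n\bar R}$ codewords only controls $\langle\bx'-\bx,\bZ\rangle$ up to $O(\sqrt n\,\|\bx'-\bx\|)$, which is already of order $nE_s$ once $\|\bx'-\bx\|_2^2\sim nE_s$; the argument therefore has to be run as a first-moment estimate of $|\cL(\by)|$, so that codewords far from $\bx$ are suppressed by the sharp Gaussian/$\chi^2$ tails, with the slacks $\alpha_1,\alpha_2,\alpha_3$ (and the $\hat d$, $\bar R$ parameters of the code) tracked so that the geometric exponent degrades only by the prescribed $\eta$. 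A secondary, easier point is to verify that Alice gains nothing by transmitting at power above $P-s^2$ or by choosing $s^2>\gamma^2$, which pins the worst case to the configuration fixed above.
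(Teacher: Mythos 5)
Your proposal follows essentially the same route as the paper: bound $|\cA|$ by the size of Bob's list $\cL(\by)$, observe that the list is confined to a region of squared radius roughly $nE_s$ about the transmitted $\bx$ so that the volume-counting property of the spherical code gives at most $2^{n(\bar R+\frac12\log(E_s/P)+\eta)}$ codewords, and note $E_s\le E$ with the worst case $s^2=\gamma^2$. In fact the paper's own proof merely asserts this count in step (a), so your first-moment treatment of the cross term $\langle\bx'-\bx,\bZ\rangle$ supplies detail the paper omits rather than diverging from it.
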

The proof appears in Appendix~\ref{app:A:set}. Note that, essentially, from  the above claim, one can conclude that the choice of $s=\gamma^2$ is the `best' choice for a cheating Alice (such a choice maximizes $|\cA|$), i.e., Alice can be no worse than when it privately fixes the \guncw\gsqdsq\ to an AWGN channel with variance $\gamma^2$. We will choose  $0<\eta<\beta_1$ later (cf. Claim~\ref{claim:step:1}). 

We now show that our choice of hash functions $G_1(\cdot)$ and $G_2(\cdot)$ allows us to essentially `trim' down this set $\cA$ of `confusable' vectors all the way down to none (this will prevent a cheating Alice from confusing Bob with more than 1 commit strings).  Recall that Alice's choice in the commit phase is $\bx$. For a given hash value $h_1\in\{0,1\}^{n(\bar{R}+\frac{1}{2}\log(\frac{E}{P}) + \beta_1)}$ sent by Alice, let 
\begin{equation}
I_i(h_1):=\begin{cases}
1& \mbox{ if } G_1(u^m_i)=h_1 \\
0& \mbox{ otherwise.}
\end{cases}
\end{equation}
$I_i(h_1)$ is an indicator random variable which identifies if $u^m_i$ has a \emph{hash-collision} under $G_1$ with the hash value $h_1.$
Also, let 
\begin{align}
I(h_1):=\sum_{i=1}^{|\cA|} I_i(h_1)
\end{align}
 denotes the total number of hash collisions with hash value $h_1$.
Then, the following holds when $0<\eta<\beta_1$ (see proof in Appendix \ref{app:step:1}):\\

\begin{claim}\label{claim:step:1}
$$\mathbb{P}\left(\exists\ h_1\in \{0,1\}^{n(\bar{R}+\frac{1}{2}\log(\frac{E}{P})+ \beta_1)}:  I(h_1)>6n\bar{R}+1\right)$$ vanishes exponentially in $n$ as $n\to\infty$.
\end{claim}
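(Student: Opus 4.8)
The plan is to fix a single hash value $h_1 \in \{0,1\}^{n(\bar{R}+\frac{1}{2}\log(E/P)+\beta_1)}$, control $\mathbb{P}(I(h_1) > 6n\bar{R}+1)$, and then take a union bound over all $2^{n(\bar{R}+\frac{1}{2}\log(E/P)+\beta_1)}$ possible values of $h_1$. The key structural input is that $\mathcal{G}_1$ is a $3n\bar{R}$-universal hash family: this means that for any $3n\bar{R}$ distinct strings in $\{0,1\}^m$, their images under a uniformly random $G_1 \in \mathcal{G}_1$ are jointly uniform over the range, hence mutually independent. So the $|\mathcal{A}|$ indicators $I_i(h_1)$ are \emph{$3n\bar{R}$-wise independent} Bernoulli random variables, each with success probability $p := 2^{-n(\bar{R}+\frac{1}{2}\log(E/P)+\beta_1)}$ (the inverse of the range size). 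Their sum $I(h_1)$ has mean $\mu := |\mathcal{A}| \cdot p \le 2^{n(\bar{R}+\frac{1}{2}\log(E/P)+\eta)} \cdot 2^{-n(\bar{R}+\frac{1}{2}\log(E/P)+\beta_1)} = 2^{-n(\beta_1-\eta)}$, using Claim \ref{claim:A:set}. Since $0 < \eta < \beta_1$, we have $\mu \to 0$ exponentially; in particular $\mu \le 1$ for $n$ large.

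The main step is a tail bound for a sum of $k$-wise independent indicators with $k = 3n\bar{R}$ and small mean. First I would reduce to the case $|\mathcal{A}| = \lceil 2^{n(\bar{R}+\frac{1}{2}\log(E/P)+\eta)} \rceil$ exactly (adding phantom strings only increases $I(h_1)$, so the bound is conservative). Then I would apply a moment-based (Chernoff-type) bound valid under limited independence: for a sum $S$ of $k$-wise independent $[0,1]$-valued random variables with mean $\mu$, and any integer $t \le k$,
\begin{align}
\mathbb{P}(S \ge a) \le \binom{|\mathcal{A}|}{t} p^t \Big/ \binom{a}{t} \le \left(\frac{e\,|\mathcal{A}|\,p}{a}\right)^{t} = \left(\frac{e\mu}{a}\right)^t,\notag
\end{align}
obtained by bounding $\mathbb{P}(S \ge a)$ via the $t$-th factorial moment $\mathbb{E}\binom{S}{t} = \binom{|\mathcal{A}|}{t}p^t$ (this expectation uses only $t$-wise independence) together with Markov's inequality on $\binom{S}{t} \ge \binom{a}{t}$. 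Taking $a = 6n\bar{R}+1$ and $t = \lceil 3n\bar{R} \rceil \le k$ gives $\mathbb{P}(I(h_1) \ge 6n\bar{R}+1) \le (e\mu/(6n\bar{R}+1))^{3n\bar{R}} \le (e\mu)^{3n\bar{R}}$, and since $\mu \le 2^{-n(\beta_1-\eta)}$ this is at most $2^{-3n^2\bar{R}(\beta_1-\eta) + O(n^2\bar{R})}$, which decays like $2^{-cn^2}$ for a constant $c > 0$ depending on $\beta_1 - \eta$.

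Finally I would take the union bound over the $2^{n(\bar{R}+\frac{1}{2}\log(E/P)+\beta_1)}$ choices of $h_1$: the probability in the claim is at most $2^{n(\bar{R}+\frac{1}{2}\log(E/P)+\beta_1)} \cdot 2^{-cn^2}$, which still vanishes exponentially (indeed super-exponentially) in $n$, since the $\Theta(n^2)$ term dominates the $\Theta(n)$ union-bound term. The main obstacle I anticipate is getting the limited-independence tail bound stated cleanly with the right parameters — one must be careful that the independence parameter $3n\bar{R}$ is large enough to afford a factorial moment of order $t \approx 3n\bar{R}$, and that the threshold $6n\bar{R}+1$ (roughly twice $t$) is what makes the bound $(e\mu/a)^t$ genuinely small even before using that $\mu$ is exponentially small; the extra decay from $\mu \to 0$ is then a bonus that easily absorbs the union-bound factor.
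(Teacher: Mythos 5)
Your proposal is correct, and the setup (bounding $\mu=\mathbb{E}_{G_1}[I(h_1)]\leq|\cA|\,2^{-n(\bar{R}+\frac{1}{2}\log(E/P)+\beta_1)}\leq 2^{-n(\beta_1-\eta)}$ via Claim~\ref{claim:A:set}, then a limited-independence tail bound, then a union bound over $h_1$) matches the paper's. Where you genuinely diverge is the concentration step. The paper invokes Rompel's moment inequality (Lemma~\ref{lem:rompel}) for $l$-wise independent variables with $l=3n\bar{R}$ and $\kappa=2l$, obtaining a per-$h_1$ tail of $O\bigl((\tfrac{l\mu+l^2}{4l^2})^{l/2}\bigr)=O(2^{-\frac{3}{2}n\bar{R}})$ using only $\mu\leq 1$; it must then explicitly verify that the union-bound exponent satisfies $\bar{R}+\frac{1}{2}\log(E/P)+\beta_1<\frac{3}{2}\bar{R}$. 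You instead use the elementary factorial-moment argument $\mathbb{P}(S\geq a)\leq\mathbb{E}\binom{S}{t}/\binom{a}{t}=\binom{|\cA|}{t}p^t/\binom{a}{t}\leq(e\mu/a)^t$, which is self-contained (no citation needed, and the identity $\mathbb{E}\binom{S}{t}=\binom{|\cA|}{t}p^t$ uses exactly the $t$-wise uniformity guaranteed by the $3n\bar{R}$-universal family), and which additionally exploits $\mu\leq 2^{-n(\beta_1-\eta)}$ to get a $2^{-\Theta(n^2)}$ per-$h_1$ bound that swallows the union-bound factor without any comparison of exponents. Two cosmetic points: take $t=\lfloor 3n\bar{R}\rfloor$ rather than a ceiling so that $t$ does not exceed the independence order, and the "phantom strings" device is unneeded — monotonicity of $\binom{|\cA|}{t}$ in $|\cA|$ already lets you substitute the upper bound from Claim~\ref{claim:A:set}.
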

This implies that the size of the `confusable' set \emph{after} the first hash challenge via $G_1$ for any $h_1$ is larger that $n\bar{R}$ (i.e., linear in blocklength $n$) with only exponentially small probability (in block length $n$). 

Conditioned on the event $I(h_1)\leq 6 n\bar{R}+1$, $\forall\ h_1$, which occurs with high probability (w.h.p.), we now analyse the size of the `confusable' set \emph{after} the second hash challenge via $G_2$; let $\mathcal{F}_{h_1}$ denote this set of `confusable' vectors after the second hash challenge for a given $h_1$. We prove the following claim (proof in Appendix~\ref{app:step:2}):\\
\begin{claim}\label{claim:step:2}
For every $h_1\in\{0,1\}^{n(\bar{R}+\frac{1}{2}\log(\frac{E}{P}) + \beta_1)}$, we have for $n$ sufficiently large 
\begin{IEEEeqnarray*}{rCl}
\mathbb{P}\left(\exists\ \bx\neq \bx'\in\mathcal{F}_{h_1} :G_2(u^m)=G_2(u'^m)\big| I(h_1)\leq  6n\bar{R}+1\right)\leq 2^{-n\frac{\beta_2}{2}} 
\end{IEEEeqnarray*}
\end{claim}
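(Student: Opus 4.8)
\textbf{Proof proposal for Claim~\ref{claim:step:2}.}

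The plan is to condition on the high-probability event $\{I(h_1)\le 6n\bar R+1\}$ from Claim~\ref{claim:step:1}, so that the set $\mathcal{F}_{h_1}$ of bit strings (equivalently codewords) surviving the first hash challenge has size at most $6n\bar R+1$, i.e., only linear in the blocklength $n$. The key observation is that the second hash function $G_2$ maps into $\{0,1\}^{n\beta_2}$, and $\mathcal{G}_2$ is a $2$-universal hash family, so for any fixed pair $\bx\neq\bx'$ in $\mathcal{F}_{h_1}$ (with preimages $u^m\neq u'^m$) we have $\mathbb{P}(G_2(u^m)=G_2(u'^m))\le 2^{-n\beta_2}$. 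Note that $\mathcal{F}_{h_1}$ is determined \emph{before} $G_2$ is drawn (it depends only on $\bx$, $\by$, $G_1$, $\bar g_1$), so the randomness of $G_2$ is independent of which pairs we must test; this is what lets us apply $2$-universality cleanly.

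The main step is then a union bound over all unordered pairs in $\mathcal{F}_{h_1}$. First I would write
\begin{IEEEeqnarray*}{rCl}
\mathbb{P}\left(\exists\ \bx\neq \bx'\in\mathcal{F}_{h_1}: G_2(u^m)=G_2(u'^m)\ \big|\ I(h_1)\le 6n\bar R+1\right)
&\le& \binom{|\mathcal{F}_{h_1}|}{2}\, 2^{-n\beta_2}\\
&\le& \binom{6n\bar R+1}{2}\, 2^{-n\beta_2}\\
&\le& (6n\bar R+1)^2\, 2^{-n\beta_2},
\end{IEEEeqnarray*}
and then observe that the polynomial factor $(6n\bar R+1)^2$ is absorbed by half of the exponent: since $(6n\bar R+1)^2 = 2^{o(n)}$, for $n$ sufficiently large we have $(6n\bar R+1)^2 \le 2^{n\beta_2/2}$, hence the whole expression is at most $2^{-n\beta_2/2}$, which is the claimed bound. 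One should also handle the degenerate case $|\mathcal{F}_{h_1}|\le 1$ trivially (the probability is $0$), and note that when $\bx\neq\bx'$ are distinct codewords of the spherical code $\cC$ their preimages $u^m, u'^m$ under the injective encoder $\psi$ are indeed distinct, so $2$-universality genuinely applies to the pair $(u^m,u'^m)$.

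I do not expect a serious obstacle here; the claim is essentially a birthday-bound argument once the conditioning has reduced $\mathcal{F}_{h_1}$ to polynomial size. The one point requiring a little care is the \emph{order of the randomness}: we must be sure that $\mathcal{F}_{h_1}$ (and in particular the event we are conditioning on) is measurable with respect to the $\sigma$-algebra generated by $(\bx,\by,G_1)$ and is therefore independent of $G_2$, so that the conditional probability $\mathbb{P}(G_2(u^m)=G_2(u'^m)\mid \cdot)$ still equals $2^{-n\beta_2}$ for each fixed pair; this is immediate from the protocol description, in which $G_2$ is drawn in step (C6) strictly after $G_1$, $\bar g_1$, and Bob's list are fixed. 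A secondary subtlety is that the claim is stated per fixed $h_1$; to later combine it with Claim~\ref{claim:step:1} one takes a further union bound over the $2^{n(\bar R+\frac12\log(E/P)+\beta_1)}$ possible values of $h_1$, but since $\beta_2$ will be chosen large enough relative to the other exponents (or one re-derives the per-$h_1$ bound with a slightly larger $\beta_2$), the combined failure probability still decays exponentially — that bookkeeping is deferred to the place where the three claims are assembled into the final $\epsilon$-binding guarantee.
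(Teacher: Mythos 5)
Your proof is correct and follows essentially the same route as the paper's: a union bound over the at most $\binom{6n\bar R+1}{2}$ pairs in $\mathcal{F}_{h_1}$, the $2$-universality of $\mathcal{G}_2$ giving $2^{-n\beta_2}$ per pair, and absorption of the polynomial factor into half the exponent for $n$ large. Your added remarks on the independence of $G_2$ from $\mathcal{F}_{h_1}$ and the injectivity of $\psi$ are sound points of care that the paper leaves implicit.
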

As the above claim holds for every $h_1$, and noting that $\beta_2>0$, we now choose $n$ large enough to conclude that our commitment protocol is $\epsilon-$binding.

\subsection{Proof of Claim~\ref{claim:A:set}}\label{app:A:set}
From the definition of $\cA$, we have
\begin{align}
|\cA|&\stackrel{(a)}{\leq} 2^{n\left(\bar{R}+\frac{1}{2}\log\left(\frac{E_s}{P}\right)+\eta\right)}\notag\\
 &\stackrel{(b)}{\leq} 2^{n\left(\bar{R}+\frac{1}{2}\log\left(\frac{E}{P}\right)+\eta\right)}
\end{align}
where 
\begin{enumerate}[(a)]
\item follows from noting that an honest Bob will accept a vector $\bx'$ if $\bx'\in\cL(\bY)$; since a cheating may privately fix the variance of  Gaussian UNC$[\gamma,\delta]$ to some $s^2\in[\gamma^2,\delta^2]$ resulting in elasticity $E_s$, the total number of such confusable codebook vectors are at most $2^{n\left(\bar{R}+\frac{1}{2}\log\left(\frac{E_s}{P}\right)+\eta\right)}$, where $\eta>0$ choice can be arbitrary, when blocklength $n$  is sufficiently large. 
\item follows from noting that $E_s\leq E=\delta^2-\gamma^2<P$ which results in a potentially larger (exponential size) set $\cA.$  
\end{enumerate}
%
\subsection{Proof of Claim~\ref{claim:step:1}}\label{app:step:1}
The proof of this claim follows by standard concentration techniques. We first bound the expected number of hash-collisions $\mathbf{E}_{G_1}[I(h_1)]$ for a given hash value $h_1.$ In particular, we show that for $n$ large enough, the expected number of such collisions $\mathbf{E}_{G_1}[I(h_1)]<1.$ We now concentrate using this expected value and identify the `bad' hash values, say $h'$, where the expected number of hash collisions $\mathbf{E}_{G_1}[I(h')]$ exceeds the average value by  a `non-trivial' amount.
\balance
As $G_1\sim \text{Unif}\left(\mathcal{G}_1\right)$, we have
$\mathbf{E}_{G_1}[I(h_1)]\leq \sum_{i=1}^{|\cA|}2^{-(n(\bar{R}+\frac{1}{2}\log(\frac{E}{P})+\beta_1))}\leq 2^{n(\eta-\beta_1)},$ where the final inequality follows  from~Claim~\ref{claim:A:set} and noting that $\beta_1> \eta.$ We set $\td{\beta}_1:=\beta_1-\eta>0$  to get $\mathbf{E}_{G_1}[I(h_1)]{\leq}2^{-n\td{\beta}_1}.$ Hence, for $n$ sufficiently large, we have $\mathbb{E}[I(h_1)]\leq 1$, $\forall h_1$.
%
\removed{
Recall that $G_1\sim \text{Unif}\left(\mathcal{G}_1\right)$. Then,
\begin{align}
\mathbf{E}_{G_1}[I(h_1)]
&\stackrel{(a)}{\leq}\sum_{i=1}^{|\cA|}2^{-(n(H(E)+\beta_1))} \notag \\
&\stackrel{(b)}{\leq}2^{n(\eta-\beta_1)}\notag\\
&\stackrel{(c)}{\leq}2^{-n\td{\beta}_1}\label{eq:I}
\end{align} 
which is independent of $h_1$. Here $(a)$ follows from the definition of $\mathcal{G}_1$, $(b)$ follows from~Claim~\ref{claim:A:set} and noting that $\beta_1> \eta$; letting $\td{\beta}_1:=\beta_1-\eta>0$  gives us $(c)$. Note that for $n$ sufficiently large, we have $\mathbb{E}[I(h_1)]\leq 1$, $\forall h_1$. 
}
We  need the following result to proceed:\\

\begin{lemma}[~\cite{rompel}]\label{lem:rompel}
	Let $X_1,X_2,X_3....X_m\in [0,1]$ be $l$-wise independent random variables, where $l$ is an even and positive integer. Let  $X:=\sum_{i=1}^{m} X_i$, $\mu:=\mathbf{E}[X]$, and $\kappa>0$ be a constant. Then,
	\begin{align}
	\mathbb{P}\left(|X-\mu|>\Delta\right)<O\left(\left(\frac{l\mu+l^2}{\kappa^2}\right)^{l/2}\right)    
	\end{align}
\end{lemma}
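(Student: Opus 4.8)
The plan is to prove Lemma~\ref{lem:rompel} by the standard \emph{$l$-th moment method}, using $l$-wise independence to reduce it to the fully independent case. (I read the $\kappa$ in the displayed bound as the same deviation threshold $\Delta$ appearing on the left-hand side.) Since $l$ is even, $(X-\mu)^l\ge 0$, so Markov's inequality gives $\mathbb{P}(|X-\mu|>\Delta)\le \mathbf{E}[(X-\mu)^l]/\Delta^l$. Hence it suffices to establish the central-moment estimate $\mathbf{E}[(X-\mu)^l]=O((l\mu+l^2)^{l/2})$, from which the claimed tail bound is immediate.

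To obtain that estimate, set $Y_i:=X_i-\mathbf{E}[X_i]$, so $X-\mu=\sum_{i=1}^m Y_i$, each $Y_i$ has mean zero and $|Y_i|\le 1$. Expanding the power, $\mathbf{E}[(X-\mu)^l]=\sum \mathbf{E}[Y_{i_1}\cdots Y_{i_l}]$, summed over all length-$l$ index tuples. Each summand depends on at most $l$ distinct indices, so $l$-wise independence lets us factor it exactly as it would factor under full mutual independence of the $Y_i$; consequently $\mathbf{E}[(X-\mu)^l]$ equals its value in the fully independent model. This is the only place the hypothesis enters, and it reduces the whole statement to a classical inequality for sums of independent, bounded, mean-zero summands.

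Then I would bound that central moment combinatorially. A monomial $\mathbf{E}[Y_{i_1}\cdots Y_{i_l}]$ in which some index occurs exactly once vanishes (mean zero), so only monomials in which every index occurs with multiplicity at least $2$ survive; such a monomial involves at most $l/2$ distinct indices. Grouping survivors by the partition of $\{1,\dots,l\}$ (into $t$ blocks, each of size $\ge 2$) recording which positions carry the same index, and using $|Y_j|^{c}\le Y_j^2$ for $c\ge 2$ (valid since $|Y_j|\le 1$) together with $\mathbf{E}[Y_j^2]=\mathrm{Var}(X_j)\le \mathbf{E}[X_j^2]\le \mathbf{E}[X_j]$ (valid since $X_j\in[0,1]$), one gets $\mathbf{E}[(X-\mu)^l]\le \sum_{t=1}^{\lfloor l/2\rfloor} N(l,t)\,\sigma^{2t}$, where $\sigma^2:=\sum_j\mathbf{E}[Y_j^2]\le\mu$ and $N(l,t)$ counts such partitions of $\{1,\dots,l\}$. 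Standard bounds on $N(l,t)$ then collapse this sum to $C_l\,(l\sigma^2+l^2)^{l/2}\le C_l\,(l\mu+l^2)^{l/2}$, and absorbing the purely $l$-dependent constant $C_l$ into the $O(\cdot)$ as in the statement, combined with the Markov step, completes the proof.

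I expect the main obstacle to be precisely this last combinatorial step: writing down an explicit count of the surviving monomials (equivalently, partitions of $\{1,\dots,l\}$ into blocks of size at least $2$) and verifying that $\sum_t N(l,t)\sigma^{2t}$ really telescopes into $(l\sigma^2+l^2)^{l/2}$ up to an absolute constant rather than a factor blowing up with $l$. In practice I would treat the regimes $l\le\sigma^2$ (where the ``Gaussian'' contribution $\sim(l\sigma^2)^{l/2}$ from partitions into blocks of size exactly $2$ dominates) and $l>\sigma^2$ (where the $\sim l^{l}$ contribution dominates) separately, checking each against $(l\mu+l^2)^{l/2}$. The remaining ingredients --- Markov on the even central moment, the reduction via $l$-wise independence, and the variance bound $\sigma^2\le\mu$ --- are routine.
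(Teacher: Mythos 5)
The paper does not actually prove this lemma: it is imported verbatim from the cited reference (Rompel; see also Bellare--Rompel's tail bound for $t$-wise independent variables), so there is no in-paper argument to compare against. Your proposal is the standard $l$-th moment proof of that external result, and its skeleton is sound: Markov on the even central moment, the observation that $l$-wise independence makes every degree-$l$ monomial factor as in the fully independent case, the vanishing of monomials containing a singleton index, the reduction $\mathbf{E}[Y_j^{c}]\le \mathbf{E}[Y_j^2]\le \mathbf{E}[X_j]$ for $c\ge 2$ and $X_j\in[0,1]$, and the resulting bound $\mathbf{E}[(X-\mu)^l]\le \sum_{t\le l/2} N(l,t)\,\sigma^{2t}$ with $\sigma^2\le\mu$. (You are also right that the $\kappa$ in the displayed bound is a typo for the threshold $\Delta$.)

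The one piece you leave open is exactly the piece that matters, and I would press you to close it carefully. In the paper's application (Claim~\ref{claim:step:1}) one takes $l=3n\bar{R}$ and $\kappa=2l$, so the base $\frac{l\mu+l^2}{\kappa^2}$ is roughly $1/4$ and the final bound must beat $2^{n(\bar{R}+\frac{1}{2}\log(E/P)+\beta_1)}$; an implied constant that grows like $2^{\Theta(l)}$ would destroy the conclusion. A crude partition count such as $N(l,t)\le\binom{l}{t}t^{l-t}$ loses precisely such a $2^{\Theta(l)}$ factor, so it is not enough to say the constant is ``purely $l$-dependent and absorbed into the $O(\cdot)$'' --- the lemma is only useful here if the constant is absolute (Bellare--Rompel obtain $8$). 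The regime split you propose ($l\lesssim\sigma^2$, where the pair-partition/Gaussian term $(l\sigma^2)^{l/2}$ dominates and $(l-1)!!\le l^{l/2}$ does the work, versus $l\gtrsim\sigma^2$, where the $l^l$-type term dominates) is indeed the right way to get an absolute constant, but as written your proof is incomplete until that computation is carried out. Everything else in your argument is correct.
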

We make the following correspondence: $l\leftrightarrow 3n\bar{R}$, $\kappa\leftrightarrow 2l = 6n\bar{R}$. Then, using the union bound, we get:
\begin{align}
\mathbb{P}&\left(\exists h_1\in \{0,1\}^{n(\bar{R}+\frac{1}{2}\log(\frac{E}{P})+ \beta_1)}: I(h_1)>6n\bar{R}+1\right) \\
&\leq  \sum_{h_1\in \{0,1\}^{n(\bar{R}+\frac{1}{2}\log(\frac{E}{P})+ \beta_1)} } \mathbb{P}\left(I(h_1)>6n\bar{R}+1\right)\\
&\stackrel{(a)}{\leq}  2^{n(\bar{R}+\frac{1}{2}\log(\frac{E}{P})+\beta_1)}O\Big(\Big(\frac{l\mu+l^2}{\kappa^2}\Big)^{l/2}\Big) \notag\\
&\stackrel{(b)}{\leq} 2^{n(\bar{R}+\frac{1}{2}\log(\frac{E}{P})+\beta_1)} O\Big(\Big(\frac{1+l}{4l}\Big)^{l/2}\Big) \notag\\
&<2^{n(\bar{R}+\frac{1}{2}\log(\frac{E}{P})+\beta_1)} O(2^{-l/2}) \notag\\
&=2^{n(\bar{R}+\frac{1}{2}\log(\frac{E}{P})+\beta_1)} O(2^{-\frac{3}{2}n\bar{R}}) \label{eq:aa} 
\end{align}
where we have
\begin{enumerate}[(a)]
\item from Lemma~\ref{lem:rompel}
\item by noting that for $n$ sufficiently large,  $\mu=\mathbb{E}[I(h_1)]\leq 1$, $\forall h_1$, and making the correspondence $\kappa\leftrightarrow 2k$.
\end{enumerate}
Now note that~\eqref{eq:aa} tends to zero exponentially fast as we have $(\bar{R}+\frac{1}{2}\log(\frac{E}{P})+\beta_1)< \frac{3}{2}\bar{R}$. 
%
\subsection{Proof of Claim~\ref{claim:step:2}}\label{app:step:2}
Recall the definition of $\mathcal{F}_{h_1}$, and let $\mathcal{F}:=\max_{h_1} \mathcal{F}_{h_1}$. From Claim~\ref{claim:step:1}, $|\mathcal{F}|\leq 6n\bar{R}+1$ with exponentially vanishing probability of error. Noting that $G_2\sim\text{Unif}\left(\mathcal{G}_2\right)$, where $\mathcal{G}_2=\{g_2:\{0,1\}^n \rightarrow \{0,1\}^{n\beta_2}\}$, we have for every $h_1\in\{0,1\}^{n(\bar{R}+\frac{1}{2}\log(\frac{E}{P})+\beta_1)}$,
\begin{align}
\mathbb{P}&\left(\exists \bx\neq \bx'\in\mathcal{F}_{h_1}:G_2(\bx)=G_2(\bx')\big| I(h_1)\leq 6n\bar{R}+1\right)\notag\\
&\stackrel{(a)}{\leq} {\mathcal{F}\choose {2}} \mathbb{P}\left(G_2(\bx)=G_2(\bx')\right) \notag\\
&\stackrel{(b)}{\leq} {{6n\bar{R}+1}\choose {2}}2^{-n\beta_2} \notag\\
&\leq 2^{-n\frac{\beta_2}{2}} \hspace{5mm} \text{ for $n$ large enough},
\end{align}
where $(a)$ follows from the definition of $\mathcal{F}$, and using the union bound (on distinct pairs of vectors in $\mathcal{F}$); we get $(b)$ from the definition of $\mathcal{G}_2$. \\
This completes the proof of the claim.

\end{document}